\newtheorem{thm}{Theorem}
\newtheorem{cor}{Corollary}
\newtheorem{defi}{Definition}
\newtheorem{lem}{Lemma}
\theoremstyle{remark}
\newcommand{\R}{\mathbb{R}}
\newcommand{\N}{\mathbb{N}}
\newcommand{\vct}[1]{\boldsymbol{#1}}
\newcommand{\mtx}[1]{\boldsymbol{#1}}
\newcommand{\T}{\mathrm{T}}
\newcommand{\sign}{\operatorname{sign}}
\newcommand{\set}[1]{\mathcal{#1}}
\newcommand{\domain}{\operatorname{dom}}
\newcommand{\calS}{\mathcal{S}}
\newcommand{\calP}{\mathcal{P}}
\newcommand{\calC}{\mathcal{C}}
\newcommand{\calG}{\mathcal{G}}
\newcommand{\ve}{\vct{e}}
\newcommand{\vq}{\vct{q}}
\newcommand{\vs}{\vct{s}}
\newcommand{\vu}{\vct{u}}
\newcommand{\vv}{\vct{v}}
\newcommand{\vx}{\vct{x}}
\newcommand{\vy}{\vct{y}}
\newcommand{\vz}{\vct{z}}
\newcommand{\vzero}{\vct{0}}
\newcommand{\mA}{\mtx{A}}
\newcommand{\mB}{\mtx{B}}
\newcommand{\mD}{\mtx{D}}
\newcommand{\mE}{\mtx{E}}
\newcommand{\mG}{\mtx{G}}
\newcommand{\mQ}{\mtx{Q}}
\newcommand{\mS}{\mtx{S}}
\newcommand{\mU}{\mtx{U}}
\newcommand{\mV}{\mtx{V}}
\newcommand{\mW}{\mtx{W}}
\newcommand{\mX}{\mtx{X}}
\newcommand{\mZ}{\mtx{Z}}
\newcommand{\mPhi}{\mtx{\Phi}}
\newcommand{\mPsi}{\mtx{\Psi}}
\newcommand{\mId}{{\bf I}}
\newcommand{\mzero}{{\bf 0}}
\newcommand{\setC}{\set{C}}
\newcommand{\setG}{\set{G}}
\newcommand{\setL}{\set{L}}
\newcommand{\setS}{\set{S}}
\newlength{\imgwidth}
\newcommand{\e}{\begin{equation}}
\newcommand{\ee}{\end{equation}}
\newcommand{\en}{\begin{equation*}}
\newcommand{\een}{\end{equation*}}
\newcommand{\eqn}{\begin{eqnarray}}
\newcommand{\eeqn}{\end{eqnarray}}
\newcommand{\bmat}{\begin{bmatrix}}
\newcommand{\emat}{\end{bmatrix}}
\newcommand{\BIT}{\begin{itemize}}
\newcommand{\EIT}{\end{itemize}}
\newcommand{\argmin}{\mathop{\rm argmin}}
\newcounter{oursection}
\journal{Signal Processing}
\begin{document}
	
	\begin{frontmatter}
		
		
		
\title{{Optimized Structured Sparse Sensing Matrices for Compressive Sensing}}
		
		\author[TH]{Tao Hong}
		\ead{hongtao@cs.technion.ac.il}
		\author[XL]{Xiao Li}
		\ead{xli@ee.cuhk.edu.hk}
		\author[ZZ]{Zhihui Zhu}
		\ead{zzhu29@jhu.edu}
		\author[QL]{Qiuwei Li}
		\ead{qiuli@mines.edu}
		
		\address[TH]{Department
			of Computer Science, Technion - Israel Institute of Technology, Haifa, 32000, Israel.}
	
	\address[XL]{Department
		of Electronic Engineering, The Chinese University of Hong Kong, Shatin, NT, Hong Kong.}
		
	\address[ZZ]{Center of Imaging Science, Johns Hopkins University, Baltimore, MD 21218 USA}
		
    \address[QL]{Department of Electrical Engineering,
			Colorado School of Mines, Golden, CO 80401 USA.} 
		
		
		\begin{abstract}
		{ We consider designing a robust structured sparse sensing matrix consisting of a sparse matrix with a few non-zero entries per row and a dense base matrix for capturing signals efficiently.} We design the robust structured sparse sensing matrix through minimizing the distance between the Gram matrix of the equivalent dictionary and the target Gram of matrix holding small mutual coherence. Moreover, a regularization is added to enforce the robustness of the optimized structured sparse sensing matrix to the sparse representation error (SRE) of signals of interests. An alternating minimization algorithm with global sequence convergence is proposed for solving the corresponding optimization problem. Numerical experiments on synthetic data and natural images show that the obtained structured sensing matrix results in a higher signal reconstruction than a random dense sensing matrix.
		
		\end{abstract}
		
		\begin{keyword}
		Compressive sensing  \sep structured sensing matrix\sep sparse sensing matrix \sep  mutual coherence \sep sequence convergence.
			
			
		\end{keyword}

	\end{frontmatter}
	
\section{Introduction}\label{sec:introduction}
Compressive sensing (CS) supplies a paradigm of  joint compression
and sensing signals of interest \cite{CandesRombergTao2006RobustUncertaintyPrinciples,CandesWakin2008IntroductionCS}. A CS system contains two main ingredients: a {\em sensing matrix} $\bar\mPhi\in\R^{M\times N}$ ($M\ll N$) which compresses a signal $\vx$ via
$\vy = \bar\mPhi \vx$
and a  {\em dictionary} $\bar\mPsi\in\R^{N\times L}$ ($L\geq N$) that captures the sparse structure of the signal. In particular, we say $\vx\in\R^N$ is sparse if it  can be represented with a few columns of $\bar\mPsi$:
\e
\vx = \bar\mPsi \vs + \ve = \sum_{\ell}\bar\mPsi(:,\ell) \vs(\ell) + \ve,
\label{eq:sparse representation x}
\ee
where $\|\vs\|_0\leq K$ with $K\ll N\leq L$.\footnote{Throughout this paper, MATLAB notations are adopted: $\mQ(m,:), \mQ(:,k)$ and $\mQ(i,j)$ denote the $m$th row, $k$th column, and $(i,j)$th entry of the matrix $Q$; $\vq(n)$ denotes the $n$th entry of the vector $\vq$. $\|\cdot\|_0$ is used to count the number of nonzero elements.} The term $\ve$ is referred to as the sparse representation error (SRE) of $\vx$ under $\bar\mPsi$. If $\ve$ is nil, we say $\vx$ is exactly sparse.



The choice of dictionary $\bar\mPsi$ depends on the signal model and traditionally it is chosen to concisely capture the structure of the signals of interest, e.g., the Fourier matrix for frequency-sparse signals, and a multiband modulated Discrete Prolate Spheroidal Sequences (DPSS's) dictionary for sampled multiband signals \cite{ZhuWakin2015MDPSS}. Furthermore, we can also learn a dictionary from a set of representative signals (training data) called dictionary learning \cite{engan1999MOD,aharon2006ksvd,li2017new}.

In CS, the sensing matrix $\bar\mPhi$ is used to preserve the useful information contained in the signal $\vx$ such that it is possible to recover $\vx$ from its low dimensional measurements $\vy = \bar\mPhi\vx$. It has been shown that if the \emph{equivalent dictionary} $\bar\mPhi\bar\mPsi$ satisfies the restricted isometry property (RIP), the sparse vector $\vs$ in \eqref{eq:sparse representation x} can be exactly recovered from $\vy$~\cite{baraniuk2008simple,CandesRombergTao2006RobustUncertaintyPrinciples}.
Although random matrices satisfy the RIP with high probability \cite{baraniuk2008simple}, confirming whether a general matrix satisfies the RIP is NP-hard \cite{bandeira2013certifying}. Alternatively, mutual coherence, another measure of sensing matrices that is much easier to verify, has been introduced in practice to quantify and design sensing matrices~\cite{elad2007optimized,duarte2009learning,abolghasemi2012gradient,li2013projection,chen2013projection,li2015designing,bai2015alternating,hong2016efficient,hong2017SP,li2017gradient,zhu2018collaborative}. 

{ Structured sensing matrices (e.g., Toeplitz matrices and sparse matrices) have been proposed \cite{yin2010practical,zhang2010compressed,dias2013comparative,sun2013sparse,fan2014toeplitz,gan2008fast,rauhut2010compressive} to reduce the computational complexity of sensing signals in hardware (such as digital signal processor and FPGA) \cite{chen2012design,dou200564}, or  applications like electrocardiography (ECG) compression~\cite{mamaghanian2011compressed} and data stream computing~\cite{gilbert2010sparse}. A Toeplitz matrix can be implemented efficiently to a vector by the fast Fourier transform (FFT). The advantage of sparse sensing matrix over a regular one is that it contains fewer non-zero elements per row and thus can significantly reduce the number of multiplication units  for practical applications.} However, similar to a random sensing matrix, a random sparse one is less competitive than an optimized sensing matrix regarding signal recovery accuracy. 


\begin{figure*}[!htb]
	\centering
	\includegraphics[width = 12cm]{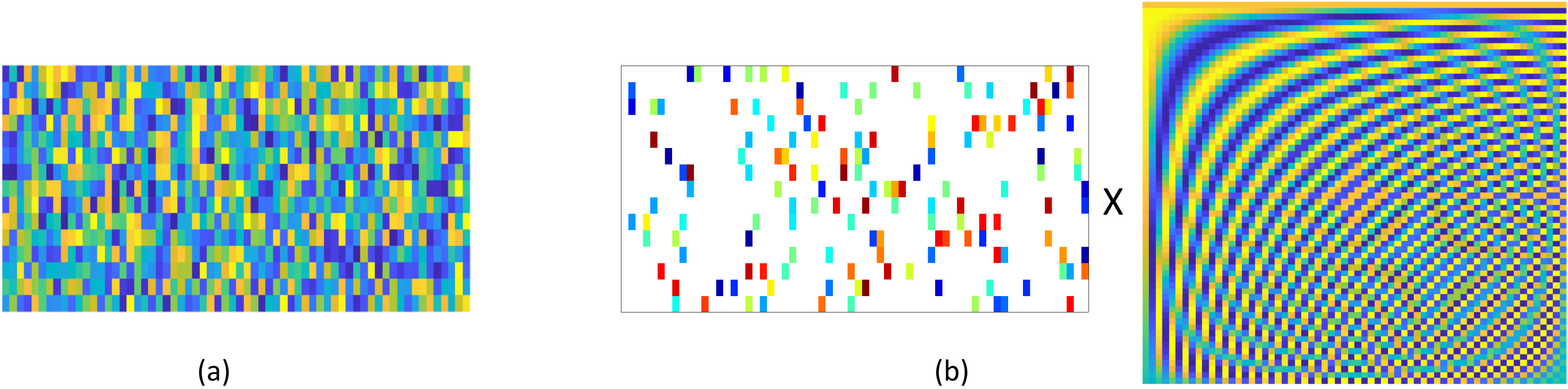}
	\caption{(a) a random Gaussian matrix; (b) a structured sparse sensing matrix consists of a sparse sensing matrix and a { base sensing} matrix}\label{fig:sprseMatrix}
\end{figure*}

Motivated by this, we consider the design of a structured sparse sensing matrix that can not only efficiently compress signals but also has similar performance as the dense ones. Specifically, we attempt to design a structured sparse sensing matrix via enhancing the mutual coherence (defined in \eqref{eq:define:mutualcoherence}) property of the equivalent dictionary, $\bar{\mPhi}\bar{\mPsi}$.  Our main contributions are stated as follows:
\vspace{-0.1cm}
\begin{itemize}
	\item We propose a framework for designing a \emph{structured sparse sensing matrix} by decreasing the mutual coherence of the equivalent dictionary. As shown in \Cref{fig:sprseMatrix}, the structured  sparse sensing matrix consists of $\widetilde\mPhi\mA$ where $\widetilde \mPhi\in\R^{M\times N}$ is a {\em row-wise sparse matrix} while $\mA\in\R^{N\times N}$ is referred to a {\em base sensing matrix} that can be implemented with linear complexity to a signal. { In general, the choice of $\mA$ depends on the practical situations, e.g., we choose $\mA$ as a DCT matrix when used for natural images with a dictionary learned by the KSVD algorithm \cite{aharon2006ksvd}. For some cases, one may simply set $\mA$ as an identity matrix, giving a sparse sensing matrix. To our knowledge, this work is the first attempt to optimize a (structured) sparse sensing matrix by minimizing the mutual coherence.}
	\vspace{-0.2cm}    
	\item  We provide an alternating minimization algorithm for solving the formulated nonconvex nonsmooth optimization problem (see \eqref{eq:sparse sensing problem}). Despite the nonconvexity and nonsmoothness, we perform a rigorous convergence analysis to show that the sequence of iterates generated by our proposed algorithm with random initialization converges to a critical point. 
	\vspace{-0.2cm}
	\item {Experiments on natural images show that the obtained structured sensing matrix---with or without $\mA$---outperforms a random dense sensing matrix. It is of interest to note that by setting $\mA$ as the DCT matrix, the optimized structured sensing matrix has almost identical performance in terms of Peak Signal to Noise Ratio (PSNR) as the optimized dense sensing matrix, see \Cref{fig:PSNR high dim}}.
\end{itemize}
\vspace{-0.2cm}

The outline of this paper is given as follows. We review the previous approaches in robust sensing matrix design in Section~\ref{sec:preliminary}. In Section \ref{sec:design problem}, a framework for designing a structured sparse sensing matrix is proposed with the mutual coherence behavior of the equivalent dictionary and the SREs of the signals being considered simultaneously. An alternating minimization algorithm for solving the optimal design problem with a rigorous convergence analysis is provided in Section~\ref{sec:proposed algorithm}. We validate the performance of the obtained structured sensing matrix on both synthetic data and real images in Section~\ref{sec:simulations}. Conclusions are given in Section \ref{sec:Conclusions}.

\section{Preliminaries}
\label{sec:preliminary}
In this section, we will brief the definition of mutual coherence to CS and introduce the previous work on designing robust sensing matrices.
\subsection{Mutual Coherence}
The \emph{mutual coherence} of $\mQ\in\R^{M\times L}$ is defined as
\begin{equation}
\mu(\mQ)\triangleq \max_{1\leq i \neq j \leq L}\frac{|\vq_i^{\cal
T}\vq_j|}{\|\vq_i\|_2 \|\vq_j\|_2}\geq\underline\mu \triangleq \sqrt{\frac{L-M}{M(L-1)}},
\label{eq:define:mutualcoherence}
\end{equation}
where $\vq_i$ is the $i$th column of $\mQ$ and $\underline \mu$ is the lower bound of $\mu(\mQ)$ called Welch Bound \cite{strohmer2003grassmannian}.  {The connection between the mutual coherence and the RIP is given in \cite[Section 5.2.3]{Elad:2010book}. Roughly speaking, the smaller mutual coherence, the better the RIP.}

With the measurements $\vy = \bar\mPhi \vx$ and the prior information that $\vx$ is sparse in $\bar\mPsi$, we can recover the signal as $\widehat \vx = \bar\mPsi \widehat \vs$ where\footnote{Here $\|\cdot\|_2$ denotes the $l_2$ norm of a vector.}
\begin{equation}
\widehat \vs = \argmin_{\vs} ~ \|\vy - \bar\mPhi\bar\mPsi  \vs\|_2^2~~~~\text{s.t.}~~~  \|\vs\|_0 \leq K 
\label{eq:sparserecov}
\end{equation}
which can be exactly or approximately solved via convex methods~\cite{CandesRombergTao2006RobustUncertaintyPrinciples,chen1998atomicDecomposition,donoho2003optimallySparseRepl1} or greedy algorithm~\cite{tropp2004greed}, e.g., the orthogonal marching pursuit (OMP). It is shown in~\cite{tropp2004greed} that OMP can stably find $\vs$  (and hence obtain an accurate estimation of $\vx$) if
\begin{equation}
 K <
\frac{1}{2}\left[1+\frac{1}{\mu(\bar\mPhi\bar\mPsi)}\right].\label{mu-K}
\end{equation}

\subsection{Optimized Robust Sensing Matrix \cite{li2015designing,hong2016efficient}}
Motivated by \eqref{mu-K}, abundant efforts have been devoted to design the sensing matrix via minimizing the mutual coherence $\mu(\bar\mPhi\bar\mPsi)$, including a subgradient projection method \cite{lu2014design}, and the ones based on alternating minimization. \cite{elad2007optimized,abolghasemi2012gradient,li2013projection}. Experiments on synthetic data indicate that the obtained  sensing matrices give much better performance than the random one when the signals are exactly sparse, i.e., $\ve = \vzero$ in \eqref{eq:sparse representation x}.

However, it was recently realized that an optimized sensing matrix obtained by minimizing the mutual coherence is not robust to SRE in \eqref{eq:sparse representation x} and thus the corresponding CS system yields poor performance \cite{li2015designing}. In particular, the SRE  always exists in the practical signals of interests,  even representing them via a learned dictionary \cite{aharon2006ksvd}. Let $\mX\in \R^{N\times J}$ be a set of training data and $\mS$ consist of the sparse coefficients of $\mX$ in $\bar\mPsi$:
$\mX = \bar\mPsi \mS + \mE$
where $\|\mS(:,j)\|_0\leq K, \forall j$. Then, in \cite{li2015designing,hong2016efficient}, the SRE matrix 
\e
\mE := \mX - \bar\mPsi \mS
\label{eq:SRE E}\ee
is utilized as the regularization to yield a robust sensing matrix.

Denote by $\calG_\xi$ the set of relaxed equiangular tight frame (ETF) Gram matrices:
\begin{align}
\calG_{\xi} = \bigg\{\mG\in\mathbb S^{L\times L}: \mG(i,i) = 1,\forall i, \max_{i\neq j}|\mG(i,j)|\leq \xi \bigg\},
\label{eq:define G}\end{align}
where $\xi\in[0,1)$ is a pre-set threshold and usually chosen as $0$ or $\underline\mu$ \cite{abolghasemi2012gradient,li2013projection,li2015designing,hong2016efficient} and $\mathbb S^{L\times L}$ denotes a set of real $L\times L$ symmetric matrices. Then the sensing matrices proposed in \cite{li2015designing,hong2016efficient} are optimized by solving the following optimization problem\footnote{$\|\cdot\|_F$ represents the Frobenius norm.}:
\begin{equation}\underset{\bar\mPhi, \mG\in\calG_\xi } {\text{min}}~ ||\mG - \bar\mPsi^\T\bar\mPhi^\T\bar\mPhi\bar\mPsi||^2_F + \lambda \|\bar\mPhi\mE\|_F^2, \label{eq:existing approaches}\end{equation}
where the first term is utilized to control the average mutual coherence of the equivalent dictionary, the second term $\|\bar\mPhi\mE\|_F^2$ is  a regularization to make the sensing matrix robust to SRE, and $\lambda\geq 0$ is the trade-off parameter to balance these two terms. Compared with previous work, simulations have shown that the obtained sensing matrices by \eqref{eq:existing approaches} achieve the highest signal recovery accuracy when the SRE exists \cite{li2015designing}.


\section{Optimized Structured Sparse Sensing Matrix}
\label{sec:design problem}
In this section, we consider designing a structured sensing matrix by taking into account the complexity of signal sensing procedure, robustness against the SRE  and the mutual coherence of the equivalent dictionary simultaneously.

As mentioned above, in applications like ECG compression~\cite{mamaghanian2011compressed}, data stream computing~\cite{gilbert2010sparse} and hardware implementation~\cite{chen2012design}, the classical CS system with a dense sensing matrix $\bar\mPhi$ encounters computational issues. Indeed, merely applying a sensing matrix $\bar\mPhi\in\R^{M\times N}$ to capture a length-$N$ signal has the computational complexity of $\mathcal O(MN)$. Moreover, in applications like image processing, one often partitions the image into a set of patches of small size (say $8\times 8$ patches, hence $N = 64$) to make the problem computationally tractable. However, the recent work in dictionary learning~\cite{sulam2016trainlets} and sensing matrix design~\cite{hong2017SP} has revealed that larger-size patches (say $64\times 64$ patches, hence $N = 4096$) lead to better performance for image processing like image denoising and compression. All of these enforce us to reduce the complexity of sensing a signal. 

An approach to tackle this computational difficulties is to impose  certain structures into the sensing matrix $\bar\mPhi$. One of such structures is the sensing matrix consisting of a sparse matrix and a base matrix that both can be efficiently implemented to sensing signals:
\e
\bar\mPhi = \widetilde\mPhi \mA,
\label{eq:structured sparse matrix}
\ee
where $\mA\in\R^{N\times N}$ is referred to as a {\em base sensing matrix} and $\widetilde \mPhi\in\R^{M\times N}$ is a row-wise sparse matrix. 

{To maintain the original purpose for reducing sensing complexity of $\vy = \bar\Phi \vx$, we restrict the choices of the base sensing matrix  $\mA$ to be either \emph{identity matrix} or the one that \emph{admits fast multiplications} like DCT matrix. We also note that the choice of base  sensing matrix $\mA$ should depend on specific applications, e.g., we can set $\mA$ to be a DCT matrix in image processing task \cite{rubinstein2010double}}.
Rewrite \eqref{eq:structured sparse matrix} as
\e
\bar\mPhi^\T  = \mA^\T \widetilde \mPhi^\T
\label{eq:structured sparse matrix 2}
\ee 
and view $\widetilde \mPhi^\T$ as the sparse representation of $\bar\mPhi^\T$ in $\mA^\T$. Thus, similar to \eqref{eq:sparse representation x} where we call $\vx$ is sparse (in $\bar\mPsi$) though itself is not sparse, we also say that $\bar\mPhi$ in \eqref{eq:structured sparse matrix} is a sparse sensing matrix (in $\mA$).\footnote{Without any confusion, we call our sensing matrix $\bar{\mPhi}$ as the structured or sparse sensing matrix in the rest of the paper.} Note that the structure of \eqref{eq:structured sparse matrix 2} also appears in the double-sparsity dictionary learning task~\cite{rubinstein2010double}, the dictionary $\bar\mPsi = \mA \widetilde \mPsi$ with $\widetilde\mPsi$ being overcomplete but column-wise sparse.

Note that the approach shown in \cite{li2015designing,hong2016efficient} requires the explicit formulation of the SRE matrix $\mE$ (defined in \eqref{eq:SRE E}) which can be huge or need extra effort to obtain in some applications, like image processing with a wavelet dictionary that no typical training data is available \cite{hong2017SP}. Let us draw each  column of $\mE$ from an  independently and identically distributed (i.i.d.) Gaussian distribution of mean zero and covariance $\sigma^2\mId$. Then $\frac{\|\bar\mPhi \mE\|_F^2}{J}$ converges in probability and almost surely to $\sigma^2\|\bar\mPhi\|_F^2$ when the number of training samples $J$ approaches to $\infty$ \cite{hong2017SP}. Thus we can get rid of the SRE matrix $\mE$ by minimizing $\|\bar\mPhi\|_F$ directly to yield a sensing matrix that is robust to the SRE.

Now our goal is to find a structured sensing matrix $\bar\mPhi = \widetilde\mPhi\mA$ such that it is robust to the SRE of the signals and the  equivalent dictionary $\bar\mPhi \bar\mPsi$ has a small mutual coherence.  So the corresponding sparse matrix is obtained via solving
\begin{align}
\{\widetilde\mPhi,\widetilde \mG\} = & \argmin_{\mPhi,\mG\in\calG_\xi} ||\mG - \bar\mPsi^\T\mA^\T\mPhi^\T\mPhi\mA\bar\mPsi||^2_F+\lambda \|\mPhi \mA\|_F^2\nonumber\\
&~~~~~\text{s.t.}~~~ \|\mPhi(m,:)\|_0 \leq \kappa, \ \forall \ m,
 \label{eq:unify}
 \end{align}
where $\kappa$ denotes the number of non-zero elements in each row of the sensing matrix.


 {Without the base matrix $\mA$, the complexity of $\bar\mPhi \vx = \widetilde \mPhi  \vx$ is $O(M\kappa)$ which is the same as the one shown in \cite{sun2013sparse}.  Thus, there is a tradeoff in adding the base matrix $\mA$ since the base matrix may improve the performance, but also increase the computational complexity. Fortunately, by choosing as a DCT matrix, it only slightly increase the computation of $O(N\log N)$ which is small or comparable to $O(M\kappa)$.} { Moreover, in some cases, $\mA\vx$ can be implemented with complexity $O(N)$. For example, if $\mA$ is an orthogonal matrix and we decompose it into a series of Givens rotation matrices \cite{dita2003factorization}. This is a significant reduction of computational complexity compared with a dense sensing matrix requiring a complexity of $O(MN)$ to sense a signal when $N$ is large  and $N\gg \kappa$. We show the difference between $O(MN)$ and $O(N\log N + M\kappa)$ in \Cref{fig:comparcomplexity} with various $\kappa$ and $N$.}

\begin{figure}[!htb]
	\centering
	\includegraphics[width=0.5\textwidth]{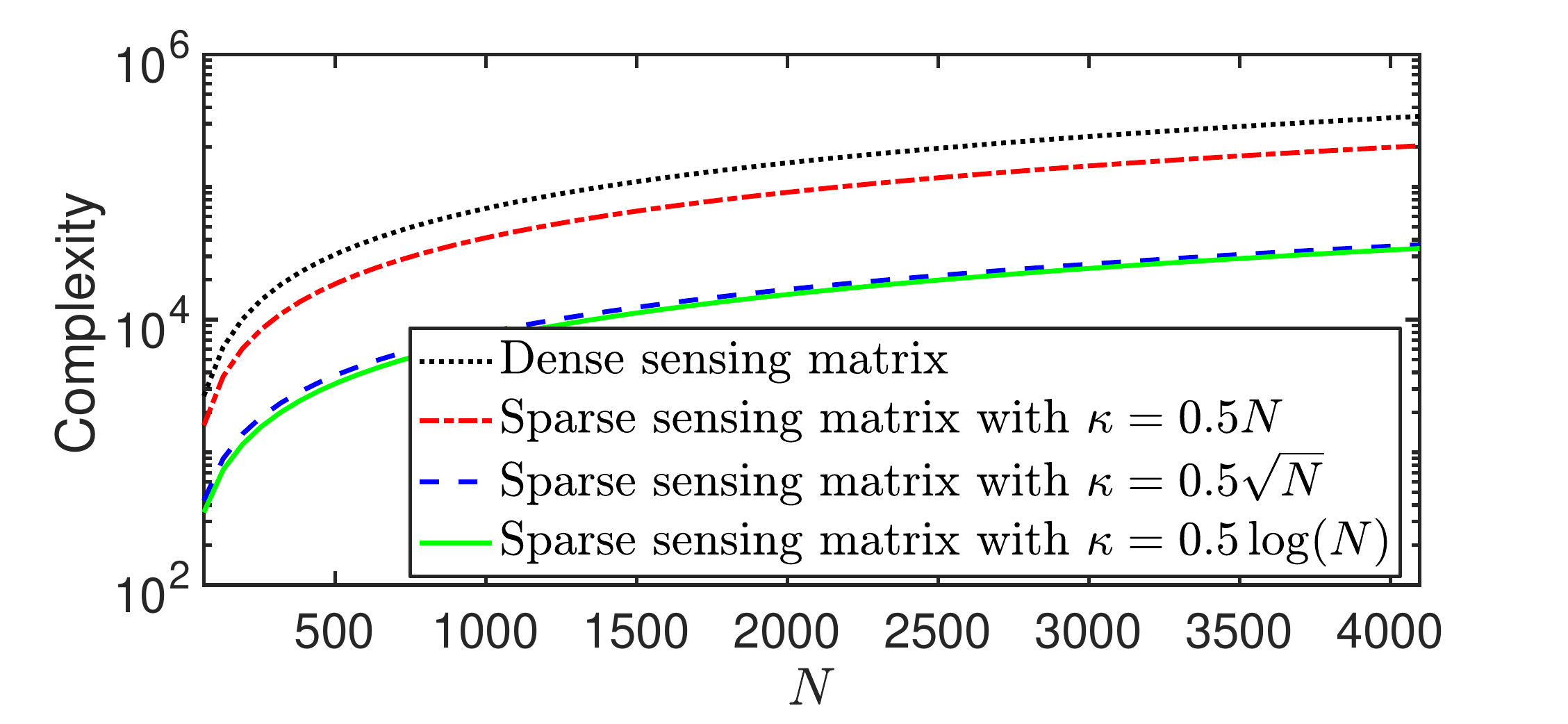}
	\caption{Illustration of the complexity  $O(MN)$ (the dotted black line) and $O(N\log N + M\kappa)$ (the other three lines indicating different $\kappa$) with $M = 10\log N$.}\label{fig:comparcomplexity}
\end{figure}

\section{Proposed Algorithm for Designing Structured Sensing Matrix}
\label{sec:proposed algorithm}
Aside from the facts that $\bar\mPhi$ is parameterized by $\widetilde \mPhi \mA$ and $\|\bar\mPhi\mE\|_F^2$ is replaced by $\|\bar\mPhi\|_F^2$, \eqref{eq:unify} differs from \eqref{eq:existing approaches} in that the former has a sparse constraint on the rows of $\widetilde \mPhi$. Note that such a constraint makes \eqref{eq:unify} highly nonconvex. In this section, we suggest utilizing alternating projected gradient method to address \eqref{eq:unify}. Moreover, we also provide a rigorous convergence analysis of the proposed algorithm.

\subsection{Proposed Algorithm for Designing Structured Sensing Matrix}
Assume $\mA$ is not null and rewrite~\eqref{eq:unify} as
\begin{equation}
\begin{split}
&\min_{\mPhi,\mG} f(\mPhi,\mG)= ||\mG - \mPsi^\T\mPhi^\T\mPhi\mPsi||^2_F+\lambda \|\mPhi\|_F^2\\
&~~~\text{s.t.}~~ \|\mPhi(m,:)\|_0 \leq \kappa, \ \forall \ m, \ \mG\in\calG_{\xi}
\label{eq:sparse sensing problem}
\end{split}
\end{equation}
where $\mPsi = \mA\bar\mPsi$.
Let $\calP_{\calG_{\xi}}:\R^{L \times L}\rightarrow \R^{L \times L}$ denote an orthogonal projector onto the set $\calG_{\xi}$:
\[
\left(\calP_{\setG_\xi}(\mG)\right)(i,j)=\left\{\begin{matrix}
1, & i = j,\\
\sign(\mG(i,j))\min(|\mG(i,j)|,\xi), & i\neq j,
\end{matrix}\right.
\] where $\sign(\cdot)$ denotes the sign function. 
Firstly the solution of minimizing $f$ in terms of $\mG$ with fixed $\mPhi$ is given by
\e
\widehat \mG = \argmin_{\mG\in\calG_{\xi}}f(\mPhi,\mG) = \calP_{\calG_{\xi}} (\mPsi^\T\mPhi^\T\mPhi\mPsi).
\label{eq:problem G}
\ee

Now we consider solving \eqref{eq:sparse sensing problem} in terms of $\mPhi$ with fixed $\mG$:
\e
\min_{\mPhi} f(\mPhi,\mG)~~~ \text{s.t.}\ \|\mPhi(m,:)\|_0 \leq \kappa, \ \forall \ m.
\label{eq:problem Phi}
\ee 
Without the sparsity constraint, the recent works~\cite{Zhu2017,li2016nonconvex} have shown that any gradient-based algorithms can provably solve $\min_{\mPhi}f(\mPhi,\mG)$. Thus, we suggest utilizing the projected gradient descent (PGD) to solve \eqref{eq:problem Phi} with the sparsity constraint. The gradient of $f(\mPhi,\mG)$ with respect to $\mPhi$ is:
\e
\nabla_{\mPhi} f(\mPhi,\mG)=2\lambda\mPhi-4\mPhi\mPsi\mG\mPsi^\T+4\mPhi\mPsi\mPsi^\T\mPhi^\T\mPhi\mPsi\mPsi^\T.
\label{eq:d Phi}
\ee
For convenience, let $\calS_\kappa$ denote the set of matrices which have at most $\kappa$ non-zero elements in each row:
\[
\calS_{\kappa}\triangleq\left\{\mZ\in\R^{M\times N}:\|\mZ(m,:)\|_0\leq \kappa, \ \forall\ m \right\}.
\]
Denote
$\calP_{\cal S_{\kappa}}:\R^{M\times N}\rightarrow\R^{M\times N}$ as an orthogonal projector on the set of $\calS_{\kappa}$:  for any $M\times N$ input matrix, that keeps the largest $\kappa$ absolute values of each row. 
So, in the $k$th step, we update $\mPhi$ as
\e
\mPhi_k \in \calP_{\calS_\kappa}\left(\mPhi_{k-1} - \eta\nabla f(\mPhi_{k-1},\mG_{k-1})\right).
\label{eq:update Phi}
\ee
we choose an arbitrary one if there exist more than one projections. 

We summarize the proposed alternating minimization for solving \eqref{eq:sparse sensing problem} in Algorithm \ref{alg: for G and Phi}.  Note that alternating minimization-based algorithm has been popularly utilized for designing sensing matrix~\cite{elad2007optimized,duarte2009learning,abolghasemi2012gradient,li2013projection,li2015designing,hong2016efficient}. However, the convergence of these algorithms is usually neither ensured nor seriously considered. In the following, we provide the rigorous convergence analysis of the proposed Algorithm \ref{alg: for G and Phi}.

\begin{algorithm}[!htb]
	\caption{Algorithm for Designing Sparse Sensing Matrix}
	\label{alg: for G and Phi}
	\begin{algorithmic}[1]
		\REQUIRE ~\\
		Initial value $\mPhi_0$, the number of maximal iterations $Iter_{max}$, step size $\eta$, the sparsity level $\kappa$ and the given trade-off parameter $\lambda$.
		\lastcon ~\\          
		Sparse sensing matrix $\mPhi_{Iter_{max}}$.
		\STATE $k\leftarrow 1$
		\WHILE {$k\leq Iter_{max}$}
		\STATE Update $\mPhi$: $\mPhi_k \in \calP_{\calS_\kappa}(\mPhi_{k-1} - \eta\nabla_{\mPhi} f(\mPhi_{k-1},\mG_{k-1}))$
		\STATE Update $\mG$: $\mG_k = \calP_{\calG_{\xi}} (\mPsi^\T\mPhi_{k}^\T\mPhi_{k}\mPsi)$
		\STATE $k\leftarrow k+1$
		\ENDWHILE
	\end{algorithmic}
\end{algorithm}

\subsection{Convergence Analysis}
Transfer \eqref{eq:sparse sensing problem} into the following unconstrained problem
\begin{equation}
\begin{split}
\min_{\mPhi,\mG} \rho(\mPhi,\mG):= f(\mPhi,\mG)+ \delta_{\setS_\kappa}(\mPhi) + \delta_{\setG_\xi}(\mG),
\label{eq:unconstrainded problem}
\end{split}
\end{equation}
where $\delta_{\setS_\kappa}(\mPhi) = \begin{cases}0, & \mPhi \in \setS_\kappa,\\ \infty, & \mPhi\notin \setS_\kappa \end{cases}$ is the indicator function (and similarly for $\delta_{\setG_\xi}(\mG)$). Clearly, \eqref{eq:unconstrainded problem} is equivalent to the original constrained problem \eqref{eq:sparse sensing problem}. {Compared with \eqref{eq:sparse sensing problem}, it is easier to take the subdifferential for \eqref{eq:unconstrainded problem} since it has no constraints. Thus, in the sequel, we focus on \eqref{eq:unconstrainded problem} since the convergence analysis mainly involves the subdifferential.}

Note that by updating $\mG$ with \eqref{eq:problem G}, $\rho(\mPhi_k,\mG_k)\leq \rho(\mPhi_k,\mG_{k-1})$.\footnote{This inequality is shown in \ref{sec:prf thm subsequence convergence}.} Following, we show the objective function is decreasing by updating the sensing matrix $\mPhi$. Denote $\rho_0 = \rho(\mPhi_0,\mG_0)$ and consider the sublevel set of $\rho$:
\[
\setL_{\rho_0} = \left\{(\mPhi,\mG): \rho(\mPhi,\mG)\leq \rho_0, \mG\in\calG_{\xi}, \mPhi\in\setS_\kappa\right\}.
\]
It is clear that for any point $(\mPhi,\mG)\in \setL_{\rho_0}$, $\|\mG\|_F$ is finite since $\mG\in\setG_\xi$ and
$\|\mPhi\|_F$ is finite since $\rho\rightarrow \infty$ when $\|\mPhi\|_F\rightarrow \infty$. Then with simple calculation, we have that both $\nabla_{\mPhi} f(\mPhi,\mG)$ and $\nabla_{\mG} f(\mPhi,\mG)$ are Lipshitz continuous, 
\e\begin{split}
	&\|\nabla_{\mPhi} f(\mPhi,\mG) - \nabla_{\mPhi} f(\mPhi',\mG)\|_F \leq L_c \|\mPhi - \mPhi'\|_F\\
	& \|\nabla_{\mG} f(\mPhi,\mG) - \nabla_{\mG} f(\mPhi,\mG')\|_F \leq L_c \|\mG - \mG'\|_F
\end{split}\label{eq:Lipschitz gradient}\ee
for all $(\mPhi,\mG),(\mPhi',\mG),(\mPhi,\mG')\in\setL_{\rho_0}$. Here $L_c>0$ is the corresponding Lipschitz constant. A direct consequence of the Lipschitz continuous is as follows.
\begin{lem} For any $L\geq L_c$, denote by
	\begin{align*}
	h_L(\mPhi,\mPhi',\mG) := & f(\mPhi',\mG) + \langle \nabla_{\mPhi} f(\mPhi',\mG), \mPhi - \mPhi' \rangle\\& + \frac{L}{2}\|\mPhi - \mPhi'\|_F^2.
	\end{align*}
	Then, $f(\mPhi,\mG) \leq h_L(\mPhi,\mPhi',\mG)$ for all $(\mPhi,\mG), (\mPhi',\mG)\in \setL_{\rho_0}$.
	\label{lem:descent lemma}\end{lem}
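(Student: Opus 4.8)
The statement is the classical descent lemma, so the plan is to invoke the fundamental theorem of calculus along the segment joining $\mPhi'$ to $\mPhi$ and then bound the remainder using the Lipschitz estimate in \eqref{eq:Lipschitz gradient}. Concretely, first I would fix $(\mPhi,\mG),(\mPhi',\mG)\in\setL_{\rho_0}$ and define $g(t) := f(\mPhi' + t(\mPhi-\mPhi'),\mG)$ for $t\in[0,1]$, noting that $g$ is differentiable with $g'(t) = \langle \nabla_{\mPhi} f(\mPhi'+t(\mPhi-\mPhi'),\mG), \mPhi-\mPhi'\rangle$. Writing $f(\mPhi,\mG) - f(\mPhi',\mG) = g(1)-g(0) = \int_0^1 g'(t)\,\d t$ and subtracting $\langle \nabla_{\mPhi} f(\mPhi',\mG),\mPhi-\mPhi'\rangle = \int_0^1 \langle \nabla_{\mPhi} f(\mPhi',\mG),\mPhi-\mPhi'\rangle\,\d t$ gives
\[
f(\mPhi,\mG) - f(\mPhi',\mG) - \langle \nabla_{\mPhi} f(\mPhi',\mG),\mPhi-\mPhi'\rangle = \int_0^1 \langle \nabla_{\mPhi} f(\mPhi'+t(\mPhi-\mPhi'),\mG) - \nabla_{\mPhi} f(\mPhi',\mG),\ \mPhi-\mPhi'\rangle \,\d t.
\]
Then I would apply Cauchy--Schwarz to the integrand, use \eqref{eq:Lipschitz gradient} to bound $\|\nabla_{\mPhi} f(\mPhi'+t(\mPhi-\mPhi'),\mG) - \nabla_{\mPhi} f(\mPhi',\mG)\|_F \leq L_c \, t\,\|\mPhi-\mPhi'\|_F \leq L\, t\,\|\mPhi-\mPhi'\|_F$, and integrate $\int_0^1 L t\,\d t = L/2$ to conclude that the left-hand side is at most $\frac{L}{2}\|\mPhi-\mPhi'\|_F^2$, which rearranges to $f(\mPhi,\mG)\leq h_L(\mPhi,\mPhi',\mG)$.

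One point that needs a word of care is that the Lipschitz bound \eqref{eq:Lipschitz gradient} is only asserted for points in $\setL_{\rho_0}$, so for the integral argument to go through I must check that the entire segment $\{\mPhi'+t(\mPhi-\mPhi') : t\in[0,1]\}$ (paired with the fixed $\mG$) stays in $\setL_{\rho_0}$, or at least in a region on which the gradient is $L_c$-Lipschitz. This is the only genuine obstacle. In fact $\setL_{\rho_0}$ is \emph{not} convex because the sparsity constraint set $\setS_\kappa$ is not convex, so the segment may leave $\setL_{\rho_0}$. The clean fix is to observe that the Lipschitz continuity of $\nabla_{\mPhi} f(\cdot,\mG)$ actually holds on any bounded set (since $f$ is polynomial in the entries of $\mPhi$, its gradient \eqref{eq:d Phi} is a polynomial map, hence smooth with locally bounded second derivative); one then works on a fixed bounded convex set containing $\setL_{\rho_0}$ — e.g. the Frobenius ball $\{\|\mPhi\|_F \leq R\}$ with $R$ large enough that $\setL_{\rho_0}$ (whose $\mPhi$-component is bounded, as noted just before the lemma) is contained in it — and takes $L_c$ to be the Lipschitz constant on that ball. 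Since the segment joining two points of the ball stays in the ball, the integral estimate is valid, and the inequality then holds in particular for all $(\mPhi,\mG),(\mPhi',\mG)\in\setL_{\rho_0}$, which is what is claimed. I would state this boundedness/convex-enlargement remark explicitly at the start, since it is the substantive point; the rest is the routine one-line computation above.
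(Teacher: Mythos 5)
Your proof is correct and follows essentially the same route as the paper's: parametrize $f$ along the segment from $\mPhi'$ to $\mPhi$, apply the fundamental theorem of calculus, split off $\langle \nabla_{\mPhi} f(\mPhi',\mG),\mPhi-\mPhi'\rangle$, bound the remainder via Cauchy--Schwarz and the Lipschitz estimate \eqref{eq:Lipschitz gradient}, and integrate $\int_0^1 L\,t\,\mathrm{d}t = L/2$. Your further observation that $\setL_{\rho_0}$ is nonconvex (because of $\setS_\kappa$), so the Lipschitz bound as stated does not automatically cover the whole segment, is a genuine subtlety that the paper's own proof passes over silently; your fix---taking $L_c$ to be the Lipschitz constant of the (polynomial) gradient on a bounded convex enlargement such as a Frobenius ball containing the bounded $\mPhi$-slice of $\setL_{\rho_0}$---is exactly the right way to make the argument airtight.
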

The proof of  \cref{lem:descent lemma} is given in \ref{prf:descent lemma}. With \cref{lem:descent lemma}, we first establish that the sequence generated by  Algorithm \ref{alg: for G and Phi} is bounded and the limit point of any its convergent subsequence is a stationary point of $\rho$.
\begin{thm}[Subsequence convergence]
	Let $\{\mW_k =(\mPhi_k,\mG_k)\}_{k\geq 0}$ be the sequence generated by Algorithm \ref{alg: for G and Phi} with  step size $\eta< \frac{1}{L_c}$.
	Then the sequence $\{\mW_k\}$ is bounded and obeys the following properties:
	\begin{enumerate}[(P1)]
		\item sufficient decrease:
		\begin{align}
		&\rho(\mW_{k}) - \rho(\mPhi_{k+1},\mG_{k})\geq \frac{\frac{1}{\eta} - L_c}{2}\| \mPhi_{k}-\mPhi_{k+1}\|_F^2,\nonumber\\
		&\rho(\mPhi_{k+1},\mG_k) - \rho(\mW_{k+1})\geq \|\mG_k - \mG_{k+1}\|_F^2.
		\label{eq:sufficient decrease}\end{align}
		\item the sequence $\{\rho(\mPhi_k,\mG_k)\}_{k\geq 0}$ is convergent.
		\item convergent difference:
		\e
		\lim_{k\rightarrow \infty}\|\mW^{k+1}-\mW^k\|_F= 0.
		\label{eq:diff goes to 0}
		\ee
		\item for any convergent subsequence $\{\mW_{k'}\}$, its limit point $\underline\mW$ is a stationary point of $\rho$ and
		\e
		\lim_{k'\rightarrow \infty}\rho(\mW_{k'}) = \lim_{k\rightarrow \infty}\rho(\mW_{k}) = \rho(\underline\mW).
		\label{eq:lim f = f lim}\ee
	\end{enumerate}
	\label{thm:subsequence convergence}\end{thm}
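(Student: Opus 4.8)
The plan is to prove the four claimed properties essentially in the order they are listed, since each one feeds into the next. First I would establish boundedness of $\{\mW_k\}$: the $\mG$-update always lands in $\calG_\xi$, so $\|\mG_k\|_F$ is controlled by $\sqrt{L}$; for $\mPhi_k$, I would show that the iterates never leave the sublevel set $\setL_{\rho_0}$, from which boundedness of $\|\mPhi_k\|_F$ follows because $\rho\to\infty$ as $\|\mPhi\|_F\to\infty$ (coercivity coming from the $\lambda\|\mPhi\|_F^2$ term). To run this argument I need the two halves of (P1), so I would prove sufficient decrease first.

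For the first inequality in \eqref{eq:sufficient decrease}, I would invoke \cref{lem:descent lemma}: since $\mPhi_{k+1}$ is a minimizer of $\mPhi\mapsto h_{1/\eta}(\mPhi,\mPhi_k,\mG_k)+\delta_{\setS_\kappa}(\mPhi)$ (that is exactly what the projected-gradient step \eqref{eq:update Phi} computes, because minimizing the quadratic surrogate plus the indicator is a projection after completing the square), we have $h_{1/\eta}(\mPhi_{k+1},\mPhi_k,\mG_k)\le h_{1/\eta}(\mPhi_k,\mPhi_k,\mG_k)=f(\mPhi_k,\mG_k)$; combining with $f(\mPhi_{k+1},\mG_k)\le h_{L_c}(\mPhi_{k+1},\mPhi_k,\mG_k)\le h_{1/\eta}(\mPhi_{k+1},\mPhi_k,\mG_k)-\tfrac{1/\eta-L_c}{2}\|\mPhi_{k+1}-\mPhi_k\|_F^2$ and adding $\delta_{\setS_\kappa}$ (which is zero at both iterates) gives the claim. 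For the second inequality, $\mG_{k+1}=\calP_{\calG_\xi}(\mPsi^\T\mPhi_{k+1}^\T\mPhi_{k+1}\mPsi)$ is the exact minimizer of the $\mG$-subproblem, and since $f$ is quadratic in $\mG$ with Hessian $2\mId$ (the map $\mG\mapsto\|\mG-\mathrm{const}\|_F^2$ is $2$-strongly convex), optimality of $\mG_{k+1}$ yields $f(\mPhi_{k+1},\mG_k)-f(\mPhi_{k+1},\mG_{k+1})\ge\|\mG_k-\mG_{k+1}\|_F^2$; this is the footnoted inequality $\rho(\mW_k)\le\rho(\mPhi_k,\mG_{k-1})$ strengthened by strong convexity. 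Adding the two inequalities shows $\rho(\mW_k)$ is nonincreasing, hence $\le\rho_0$, which closes the boundedness argument and gives (P2) since $\{\rho(\mW_k)\}$ is monotone and bounded below by $0$.

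For (P3), I would telescope the summed sufficient-decrease inequalities: $\sum_{k}\big(\tfrac{1/\eta-L_c}{2}\|\mPhi_k-\mPhi_{k+1}\|_F^2+\|\mG_k-\mG_{k+1}\|_F^2\big)\le\rho_0-\lim_k\rho(\mW_k)<\infty$, so both $\|\mPhi_k-\mPhi_{k+1}\|_F\to 0$ and $\|\mG_k-\mG_{k+1}\|_F\to 0$, giving \eqref{eq:diff goes to 0}. For (P4), take a convergent subsequence $\mW_{k'}\to\underline\mW=(\underline\mPhi,\underline\mG)$. By (P3), $\mW_{k'+1}\to\underline\mW$ as well. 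To get stationarity I would write down the first-order optimality (subdifferential) conditions of the two subproblems at iteration $k'$: from the $\mPhi$-update, $0\in\nabla_\mPhi f(\mPhi_{k'},\mG_{k'})+\tfrac1\eta(\mPhi_{k'+1}-\mPhi_{k'})+\partial\delta_{\setS_\kappa}(\mPhi_{k'+1})$ (using the projection characterization of the PGD step and the fact that $h_L$'s minimizer over $\setS_\kappa$ satisfies this inclusion); from the $\mG$-update, $0\in\nabla_\mG f(\mPhi_{k'+1},\mG_{k'+1})+\partial\delta_{\setG_\xi}(\mG_{k'+1})$. Then I would pass to the limit: $\nabla f$ is continuous, the perturbation terms $\tfrac1\eta(\mPhi_{k'+1}-\mPhi_{k'})\to 0$ by (P3), and the limiting subdifferential $\partial\delta_{\setS_\kappa}$ and $\partial\delta_{\setG_\xi}$ are outer semicontinuous (closed graph), so the inclusions survive in the limit and yield $0\in\partial\rho(\underline\mW)$, i.e.\ $\underline\mW$ is stationary. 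Finally \eqref{eq:lim f = f lim} follows: $\rho(\mW_{k'})\to\lim_k\rho(\mW_k)$ because the whole sequence $\rho(\mW_k)$ converges by (P2); and $\rho(\mW_{k'})\to\rho(\underline\mW)$ because $f$ is continuous and $\delta_{\setS_\kappa},\delta_{\setG_\xi}$ are constantly zero along the feasible iterates and at the (feasible, since $\setS_\kappa$ and $\setG_\xi$ are closed) limit point.

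The main obstacle is the passage to the limit in the $\mPhi$-subdifferential inclusion, specifically handling $\partial\delta_{\setS_\kappa}$: the set $\setS_\kappa$ is a finite union of subspaces and is nonconvex, so $\partial\delta_{\setS_\kappa}$ is the (Mordukhovich/limiting) normal cone, not an ordinary convex normal cone, and one must be careful that (i) the PGD step's optimality condition is correctly expressed in terms of this limiting normal cone, and (ii) the normal cone has closed graph so that the inclusion is preserved under the limit $\mPhi_{k'+1}\to\underline\mPhi$. Both facts are standard in variational analysis (the limiting normal cone is by construction outer semicontinuous, and projections onto closed sets produce points whose displacement lies in the proximal, hence limiting, normal cone), but they are the technically delicate ingredients; everything else is the routine descent-lemma bookkeeping sketched above. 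A minor additional point to verify carefully is that all the iterates at which I invoke \cref{lem:descent lemma} and the Lipschitz bounds \eqref{eq:Lipschitz gradient} genuinely lie in $\setL_{\rho_0}$, which is why the boundedness/invariance step must be interleaved with, rather than deferred until after, the sufficient-decrease step.
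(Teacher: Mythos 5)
Your proposal is correct and, for (P1)--(P3), it is essentially the paper's own argument: the $\mPhi$-decrease via the surrogate $h_{1/\eta}$ together with \cref{lem:descent lemma}, the $\mG$-decrease via exact minimization over the convex set $\setG_\xi$ (your $2$-strong-convexity phrasing is just the paper's expansion of $\|\mG_k-\mB\|_F^2-\|\mG_{k+1}-\mB\|_F^2$ combined with the projection inequality), then monotonicity plus telescoping for (P2) and (P3). The only substantive divergence is in (P4): the paper builds explicit subgradients $(\mD_{\mPhi_k},\mD_{\mG_k})\in\partial\rho(\mW_k)$ bounded by $(2L_c+\tfrac{1}{\eta})\|\mW_k-\mW_{k-1}\|_F\to 0$ and separately verifies $\rho(\mW_{k'})\to\rho(\underline\mW)$ through a limsup/liminf argument on $\delta_{\setS_\kappa}$, so that the value-attentive definition of the limiting subdifferential applies; you instead pass to the limit directly in the index-shifted block optimality inclusions using outer semicontinuity (closed graph) of the limiting normal cones to $\setS_\kappa$ and $\setG_\xi$, noting that value attentiveness is automatic because the indicators vanish along the feasible iterates and at the limit point, which is feasible since both sets are closed. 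Both mechanisms rest on the same closed-graph property, so your route is legitimate and even shortcuts the paper's $\delta_{\setS_\kappa}$ computation; be aware, though, that the paper's explicit vanishing-subgradient bound is reused in the KL argument for \cref{thm:sequence convergence}, so it earns its keep there, and that combining the two block inclusions into $(\mzero,\mzero)\in\partial\rho(\underline\mW)$ uses the separability of the nonsmooth part of $\rho$ across $(\mPhi,\mG)$, a fact both you and the paper invoke implicitly. Your closing caveat about checking that all points (and segments) at which \eqref{eq:Lipschitz gradient} and \cref{lem:descent lemma} are invoked lie in $\setL_{\rho_0}$ is well taken --- the paper glosses over this --- and interleaving the invariance and sufficient-decrease arguments is the right way to close it.
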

The proof of Theorem \Cref{thm:subsequence convergence} is given in \ref{sec:prf thm subsequence convergence}. In a nutshell, Theorem \Cref{thm:subsequence convergence} implies that the sequence generated by Algorithm \ref{alg: for G and Phi} has at least one convergent subsequence, and the limit point of any convergent subsequence is a stationary point of $\rho$. The following result establishes that the sequence generated by  Algorithm \ref{alg: for G and Phi} is a Cauchy sequence and thus the sequence itself is convergent and converges to a stationary point of $\rho$. { Clearly, if the step size is chosen to satisfy \eqref{eq:sufficient decrease}, the convergence still holds. Thus, we suggest a backtracking method in \ref{Backtracking:unknow LC} to practically choose $\eta$.}

\begin{thm}[Sequence convergence] The sequence of iterates $\{(\mPhi_k,\mG_k)\}_{k\geq 0}$  generated by Algorithm \ref{alg: for G and Phi} with  step size $\eta< \frac{1}{L_c}$ converges to a stationary point of $\rho$.
	\label{thm:sequence convergence}\end{thm}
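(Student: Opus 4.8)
The plan is to bring the iteration produced by Algorithm~\ref{alg: for G and Phi} into the by-now-standard Kurdyka--\L ojasiewicz (KL) framework for descent methods (in the spirit of Attouch, Bolte and Svaiter, and of Bolte, Sabach and Teboulle). Once one knows that (i) $\rho$ is a proper lower semicontinuous KL function bounded below, (ii) the iterates enjoy a \emph{sufficient-decrease} estimate $\rho(\mW_k)-\rho(\mW_{k+1})\ge c_1\|\mW_k-\mW_{k+1}\|_F^2$, (iii) there is a \emph{relative-error} bound $\operatorname{dist}(0,\partial\rho(\mW_{k+1}))\le c_2\|\mW_k-\mW_{k+1}\|_F$, and (iv) along some subsequence $\mW_{k_j}\to\underline\mW$ one has $\rho(\mW_{k_j})\to\rho(\underline\mW)$, then $\{\mW_k\}$ has finite length, hence is Cauchy, hence converges; \Cref{thm:subsequence convergence} then identifies the limit as a stationary point of $\rho$. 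Much of this is already in place: adding the two inequalities of (P1) in \Cref{thm:subsequence convergence} gives (ii) with $c_1=\min\{(\tfrac1\eta-L_c)/2,\,1\}>0$ (positive precisely because $\eta<1/L_c$), boundedness of $\{\mW_k\}$ and convergence of $\{\rho(\mW_k)\}$ are (P2), and (iv) is (P4). So the genuine work is (i) and (iii).

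For (i) I would note that $f(\mPhi,\mG)=\|\mG-\mPsi^\T\mPhi^\T\mPhi\mPsi\|_F^2+\lambda\|\mPhi\|_F^2$ is a polynomial in the entries of $(\mPhi,\mG)$, hence real-analytic and in particular semialgebraic; the feasible set $\setS_\kappa$ is a finite union of coordinate subspaces of $\R^{M\times N}$ and $\setG_\xi$ is a (bounded) polyhedron in the space of symmetric matrices, so both are semialgebraic and their indicator functions $\delta_{\setS_\kappa},\delta_{\setG_\xi}$ are semialgebraic and lower semicontinuous. Therefore $\rho=f+\delta_{\setS_\kappa}+\delta_{\setG_\xi}$ is a proper lsc semialgebraic function, bounded below by $0$, and every such function has the KL property at each of its points.

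For (iii) I would construct an explicit subgradient at $\mW_{k+1}=(\mPhi_{k+1},\mG_{k+1})$. Since $\delta_{\setS_\kappa}$ depends only on $\mPhi$ and $\delta_{\setG_\xi}$ only on $\mG$, the limiting subdifferential splits as $\partial\rho(\mPhi,\mG)=\big(\nabla_{\mPhi}f(\mPhi,\mG)+N_{\setS_\kappa}(\mPhi)\big)\times\big(\nabla_{\mG}f(\mPhi,\mG)+N_{\setG_\xi}(\mG)\big)$. Because Step~4 uses the already-updated iterate, $\mG_{k+1}=\calP_{\setG_\xi}(\mPsi^\T\mPhi_{k+1}^\T\mPhi_{k+1}\mPsi)$ is the exact minimizer of the convex problem $\min_{\mG\in\setG_\xi}f(\mPhi_{k+1},\mG)$, so $0\in\nabla_{\mG}f(\mPhi_{k+1},\mG_{k+1})+N_{\setG_\xi}(\mG_{k+1})$ and the $\mG$-block of the subgradient may be taken to be $0$. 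For the $\mPhi$-block, the projected-gradient step $\mPhi_{k+1}\in\calP_{\setS_\kappa}\big(\mPhi_k-\eta\nabla_{\mPhi}f(\mPhi_k,\mG_k)\big)$ gives, via the generic optimality condition for a Euclidean projection onto a closed set, $\tfrac1\eta(\mPhi_k-\mPhi_{k+1})-\nabla_{\mPhi}f(\mPhi_k,\mG_k)\in\widehat N_{\setS_\kappa}(\mPhi_{k+1})\subseteq N_{\setS_\kappa}(\mPhi_{k+1})$; adding $\nabla_{\mPhi}f(\mPhi_{k+1},\mG_{k+1})$ yields
\[
\xi_{k+1}:=\tfrac1\eta(\mPhi_k-\mPhi_{k+1})+\nabla_{\mPhi}f(\mPhi_{k+1},\mG_{k+1})-\nabla_{\mPhi}f(\mPhi_k,\mG_k)\ \in\ \partial_{\mPhi}\rho(\mW_{k+1}).
\]
Since \Cref{thm:subsequence convergence} keeps every iterate in the sublevel set $\setL_{\rho_0}$, on which $\nabla f$ is Lipschitz (this uses \eqref{eq:d Phi} and finiteness of $\|\mPhi\|_F$ on $\setL_{\rho_0}$), we get $\|\xi_{k+1}\|_F\le(\tfrac1\eta+L_c)\|\mPhi_k-\mPhi_{k+1}\|_F+L_c\|\mG_k-\mG_{k+1}\|_F\le c_2\|\mW_k-\mW_{k+1}\|_F$ for a constant $c_2$, which is (iii).

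With (i)--(iv) assembled, I would run the usual KL argument: take a convergent subsequence $\mW_{k_j}\to\underline\mW$ (it exists by boundedness) and note $\rho(\mW_k)\downarrow\rho(\underline\mW)$; if $\rho(\mW_{k_0})=\rho(\underline\mW)$ for some finite $k_0$ the sufficient decrease forces $\mW_k$ to be eventually constant, so assume $\rho(\mW_k)>\rho(\underline\mW)$ for all $k$. Apply the KL inequality at $\underline\mW$ with desingularizing function $\varphi$, and combine concavity of $\varphi$ with (ii)--(iii) to obtain, for all large $k$, $2\|\mW_{k+1}-\mW_k\|_F\le\|\mW_k-\mW_{k-1}\|_F+\tfrac{c_2}{c_1}\big(\varphi(\rho(\mW_k)-\rho(\underline\mW))-\varphi(\rho(\mW_{k+1})-\rho(\underline\mW))\big)$; summing telescopes and yields $\sum_k\|\mW_{k+1}-\mW_k\|_F<\infty$. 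Hence $\{\mW_k\}$ is Cauchy, $\mW_k\to\mW_\star$, and (P4) of \Cref{thm:subsequence convergence} shows $\mW_\star$ is a stationary point of $\rho$. The step I expect to be the main obstacle is (iii): one must carefully justify that the optimality condition for the projection onto the \emph{nonconvex} set $\setS_\kappa$ genuinely produces an element of the limiting normal cone, and that the Lipschitz constant for $\nabla f$ can be taken uniform over all iterates --- both resting on \Cref{thm:subsequence convergence} --- together with the routine but fiddly induction showing the tail of the sequence stays inside the KL neighborhood of $\underline\mW$, which uses \eqref{eq:diff goes to 0} and the finite-length estimate bootstrapped over the same window.
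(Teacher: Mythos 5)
Your proposal follows essentially the same route as the paper's own proof: the sufficient-decrease estimate from (P1), the relative-error subgradient bound (the paper's inequality \eqref{safeguard}, which is your $\xi_{k+1}$ up to an index shift), and the KL inequality combined with concavity of the desingularizing function to telescope into a finite-length $\sum_k\|\mW_{k+1}-\mW_k\|_F<\infty$, hence a Cauchy sequence whose limit is stationary by \Cref{thm:subsequence convergence}. If anything, you supply details the paper leaves implicit (semialgebraicity yielding the KL property, and the induction keeping the tail in the KL neighborhood), so the argument is correct and matches the published one.
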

The proof of \cref{thm:sequence convergence} is given in \ref{sec:prf thm sequence convergence}. A special property named Kurdyka-Lojasiewicz (KL) inequality (see Definition~\ref{def:KL} in \ref{sec:prf thm sequence convergence}) of the objective function is introduced in proving Theorem \Cref{thm:sequence convergence}. We note that the KL inequality has been utilized to prove the convergence of proximal alternating minimization algorithms~\cite{attouch2010proximal,attouch2013convergence,bolte2014proximal}. Our proposed Algorithm \ref{alg: for G and Phi} differs from the proximal alternating minimization algorithms \cite{attouch2010proximal,attouch2013convergence,bolte2014proximal} in that we update $\mG$ (see \eqref{eq:problem G}) by exactly minimizing the objective function rather than utilizing a proximal operator (which decreases the objective function less than the one by exactly minimizing the objective function). Updating $\mG$ by exactly minimizing the objective function is popularly utilized in~\cite{elad2007optimized,duarte2009learning,abolghasemi2012gradient,li2013projection,li2015designing,hong2016efficient}. We believe our proof techniques for Theorems~\ref{thm:subsequence convergence} and \ref{thm:sequence convergence} will also be useful to analyze the convergence of other algorithms for designing sensing matrices~\cite{elad2007optimized,duarte2009learning,abolghasemi2012gradient,li2013projection,li2015designing,hong2016efficient}.

{Both Theorems \ref{thm:subsequence convergence} and \ref{thm:sequence convergence} hold for any fixed $\mPsi$, and hence any $\mA$ and $\bar \Psi$.} In terms of the step size for updating $\mPhi$, Algorithm \ref{alg: for G and Phi} utilizes a simple constant step size to simplify the analysis. But we note that the convergence analysis in Theorems \ref{thm:subsequence convergence} and \ref{thm:sequence convergence} can also be established for adaptive step sizes (such as obtained by the backtracking method), which may give faster convergence. 

When $\xi = 0$, $\calG_{\xi}$ consists of a single element (i.e., $\calG_{\xi} = \{\mId\}$) and the problem \eqref{eq:unconstrainded problem} is equivalent to
\e
\min_{\mPhi} \nu(\mPhi):= \|\mId - \mPsi^\T\mPhi^\T\mPhi\mPsi\|_F^2+\lambda\|\mPhi\|_F^2 + \delta_{\setS_\kappa}(\mPhi).
\label{eq:only Phi}\ee
Then, Algorithm~\ref{alg: for G and Phi} reduces to the projected gradient descent (PGD), which is known as the iterative hard thresholding (IHT) algorithm for compressive sensing \cite{blumensath2009iterative}. As a direct consequence of Theorems \ref{thm:subsequence convergence} and \ref{thm:sequence convergence}, the following result establishes convergence analysis of PGD for solving~\eqref{eq:only Phi}.
\begin{cor}\label{cor:PGD}
	Let $\{\mPhi_k\}_{k\geq 0}$ be the sequence generated by the PGD method with a constant step size $\eta< \frac{1}{L_c}$:
	\[
	\mPhi_{k+1} =\calP_{\calS_\kappa}(\mPhi_k- \eta \nabla_{\mPhi} f(\mPhi_{k},\mId)),
	\]
	where $\nabla_{\mPhi}f(\mPhi_k,\mId)$ is given in \eqref{eq:d Phi}.
	Then
	\begin{itemize}
		\item $\nu(\mPhi_k) - \nu(\mPhi_{k+1})\geq \frac{\frac{1}{\eta} - L_c}{2}\|\mPhi_k - \mPhi_{k+1}\|_F^2$.
		\item the sequence $\{\nu(\mPhi_k)\}_{k\geq 0}$ converges.
		\item the sequence $\{\mPhi_k\}$ converges to a stationary point of $\nu$.
	\end{itemize}
\end{cor}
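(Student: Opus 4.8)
The plan is to obtain Corollary~\ref{cor:PGD} as a direct specialization of Theorems~\ref{thm:subsequence convergence} and \ref{thm:sequence convergence} to the case $\xi = 0$. First I would observe that when $\xi = 0$ the definition \eqref{eq:define G} forces $\mG(i,i) = 1$ and $|\mG(i,j)| \leq 0$ for $i \neq j$, so $\calG_\xi = \{\mId\}$ is a singleton. Consequently the projection step in Algorithm~\ref{alg: for G and Phi}, $\mG_k = \calP_{\calG_\xi}(\mPsi^\T\mPhi_k^\T\mPhi_k\mPsi) = \mId$, is trivial: $\mG_k = \mId$ for every $k$ (including $\mG_0$, which we take to be $\mId$ without loss of generality since the first $\mG$-update overwrites any initialization). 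Hence the iterates of Algorithm~\ref{alg: for G and Phi} collapse to $\mW_k = (\mPhi_k, \mId)$, where $\mPhi_{k+1} \in \calP_{\calS_\kappa}(\mPhi_k - \eta \nabla_{\mPhi} f(\mPhi_k, \mId))$, which is precisely the PGD recursion in the statement of the corollary, and $\rho(\mPhi, \mId) = f(\mPhi, \mId) + \delta_{\setS_\kappa}(\mPhi) = \nu(\mPhi)$ by the definition \eqref{eq:only Phi} of $\nu$.

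With this identification in place, each bullet follows by reading off the corresponding conclusion of the two theorems at $\mG_k \equiv \mId$. For the first bullet, the second inequality in \eqref{eq:sufficient decrease} reads $\rho(\mPhi_{k+1},\mG_k) - \rho(\mW_{k+1}) \geq \|\mG_k - \mG_{k+1}\|_F^2 = 0$ and carries no content here; the first inequality becomes $\rho(\mW_k) - \rho(\mPhi_{k+1}, \mG_k) \geq \frac{1/\eta - L_c}{2}\|\mPhi_k - \mPhi_{k+1}\|_F^2$, i.e. $\nu(\mPhi_k) - \nu(\mPhi_{k+1}) \geq \frac{1/\eta - L_c}{2}\|\mPhi_k - \mPhi_{k+1}\|_F^2$ after substituting $\mG_k = \mG_{k+1} = \mId$. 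For the second bullet, property (P2) of Theorem~\ref{thm:subsequence convergence} states that $\{\rho(\mPhi_k, \mG_k)\}$ converges, which is exactly the convergence of $\{\nu(\mPhi_k)\}$. For the third bullet, Theorem~\ref{thm:sequence convergence} asserts $\{\mW_k\} = \{(\mPhi_k, \mId)\}$ converges to a stationary point of $\rho$; since the $\mG$-component is frozen, this is the convergence of $\{\mPhi_k\}$ to a point $\underline\mPhi$, and a stationary point of $\rho$ with $\mG$-component $\mId$ is, by the definition of the limiting subdifferential of a separable-plus-coupled sum restricted to the slice $\mG = \mId$, a stationary point of $\nu$.

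The only point that needs a sentence of care — and the main (minor) obstacle — is the last one: verifying that ``$(\underline\mPhi, \mId)$ is a stationary point of $\rho$'' genuinely implies ``$\underline\mPhi$ is a stationary point of $\nu$.'' This is where I would note that since $\rho(\mPhi, \mG) = f(\mPhi, \mG) + \delta_{\setS_\kappa}(\mPhi) + \delta_{\setG_\xi}(\mG)$ with $\delta_{\setG_0}$ equal to the indicator of $\{\mId\}$, the stationarity condition $\vzero \in \partial \rho(\underline\mPhi, \mId)$ decouples: its $\mPhi$-block gives $\vzero \in \nabla_{\mPhi} f(\underline\mPhi, \mId) + \partial \delta_{\setS_\kappa}(\underline\mPhi) = \partial \nu(\underline\mPhi)$, while its $\mG$-block is automatically satisfied because the normal cone to a singleton is the whole space. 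Everything else is a verbatim instantiation, so the proof is essentially two lines of ``set $\mG_k = \mId$ in Theorems~\ref{thm:subsequence convergence}--\ref{thm:sequence convergence}'' plus this short decoupling remark; no new estimates, no new use of the KL inequality or the descent lemma beyond what those theorems already supply.
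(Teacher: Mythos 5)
Your proposal is correct and follows essentially the same route as the paper, which presents the corollary as a direct specialization of Theorems~\ref{thm:subsequence convergence} and \ref{thm:sequence convergence} to the case $\xi=0$, $\calG_\xi=\{\mId\}$, where Algorithm~\ref{alg: for G and Phi} collapses to PGD and $\rho(\mPhi,\mId)=\nu(\mPhi)$. Your added remark that stationarity of $\rho$ at $(\underline\mPhi,\mId)$ decouples (the $\mG$-block being vacuous since the normal cone to the singleton is the whole space) is a useful detail the paper leaves implicit, but it does not change the argument.
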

We note that Corollary~\ref{cor:PGD} can also be established for PGD solving a general sparsity-constrained problem if the objective function is Lipschitz continuous. We end this section by
comparing Corollary~\ref{cor:PGD} with \cite[Theorem 3.1]{beck2013sparsity}, which provides convergence of PGD for solving a general sparsity-constrained problem. Corollary~\ref{cor:PGD} reveals that the sequence generated by PGD is convergent and converges to a stationary point, while \cite[Theorem 3.1]{beck2013sparsity} only shows subsequential convergence property of PGD, i.e., the limit point of any convergent subsequence converges to a stationary point.

{We end this section by noting that an alternative approach is to pose the sparsity for the entire sensing matrix instead of each row.
	\Cref{alg: for G and Phi} can be directly utilized for designing such sparse sensing matrix by simply revising the projection operator in updating the sensing matrix $\bm \Phi$. But we empirically observe that such sensing matrix has slightly inferior performance than the one obtained by imposing sparsity on each row. Also, the reason that we do not impose sparsity in each column is because $M$ is usually small as $M<< N$,  largely restricting the sparsity level of the sensing matrix $\mPhi$. For example, when $M = 10$ and $N = 100$ and if we want to design a sensing matrix with only $10\%$ nonzero elements. Then if we impose the sparsity to each column, then each column can only have one nonzero element which is not easy to optimize with, whereas each row can have ten nonzero elements if we impose the sparsity on each row.}

\section{Simulations}
\label{sec:simulations}
A set of experiments on synthetic data and real images are conducted in this section to illustrate the performance of the proposed method for designing sparse sensing matrix. We compare with several existing methods for designing sensing matrices  \cite{mamaghanian2011compressed,li2013projection,hong2017SP}. For a given dictionary $\mPsi$, different sensing matrices resulting in various CS systems, we list below all possible CS systems that are utilized in this paper. 

\vspace{.1in}
\begin{mdframed}[backgroundcolor=black!3,rightline=false,leftline=false,    leftmargin =0pt,
	rightmargin =0pt,
	innerleftmargin =2pt,
	innerrightmargin =2pt]
	{\small
		\begin{tabular}{ll}
			CS$_{randn}$: & $\mPsi~+$ A dense random matrix \\
			CS$_{MT}$:  & $\mPsi~+$  Sensing matrix \cite{hong2017SP}\\
			CS$_{MT-ETF}$:  & $\mPsi~+$ Sensing matrix \cite{hong2017SP} \\
			CS$_{LZYCB}$:  & $\mPsi~+$ Sensing matrix \cite{li2013projection}\\
			CS$_{bispar}$:  & $\mPsi~+$   A binary sparse sensing matrix \cite{mamaghanian2011compressed}\\
			CS$_{sparse-A}$: & $\mPsi~+$  Output of {\bf Algorithm \ref{alg: for G and Phi}} with \\& ~~~~~~~$\xi = 0$ (i.e., $\setG_\xi = \{\mId\}$) and $\mA=\text{DCT}$ \\
			CS$_{sparse}$:  & $\mPsi~+$ Output of {\bf Algorithm \ref{alg: for G and Phi}} with\\ &~~~~~~~$\xi = 0$ (i.e., $\setG_\xi = \{\mId\}$) and $\mA = \mId$  \\
			CS$_{sparse-ETF}$:  & $\mPsi~+$ Output of {\bf Algorithm \ref{alg: for G and Phi}} with\\ &~~~~~~~$\xi = \underline{\mu}$ and $\mA = \mId$
		\end{tabular}
	}
\end{mdframed}

\subsection{Synthetic Data}

We generate an $N\times L$ dictionary  $\mPsi$ with normally
distributed entries and an $M\times N$  random matrix $\mPhi_0$ for CS$_{randn}$. The training and testing data are built as follows: with the given dictionary $\mPsi$, generate a set of $J$
$K$-sparse vectors $\{\vs_i\in\R^{L}\}_{i=1}^J$, where the index of the non-zero elements in $\vs_i$ obeys a normal distribution; then obtain the sparse signals  $\{\vx_i\}_{i=1}^J$ through \e
\vx_i=\mPsi \vs_i + \ve_i,
\label{eq:generate signal}\ee
where $\ve_i$ denotes the Gaussian noise with mean zero and covariance $\sigma^2$. Denote SNR as the signal-to-noise ratio (in dB) of the signals in \eqref{eq:generate signal}.

The performance of a CS system is evaluated via the mean squared error (MSE)
\e
\text{MSE}\triangleq \frac{1}{N\times J}\sum_{i=1}^J\|\vx_i - \hat\vx_i\|^2_2,\label{MSE}
\ee
where $\hat{\vx}_i = \mPsi \hat{\vs}_i$ denotes the recovered signal and $\hat{\vs}_i$ is obtained through
\[
\hat{\vs}_i = {\argmin}_{\vs_i}|| \mPhi\vx_i -\mPhi\mPsi \vs_i||^2_2~~~\text{s.t.}~~||\vs_i||_0\leq K,~\forall i.
\]

Now, we examine the convergence of Algorithm \ref{alg: for G and Phi}. \Cref{fig:ETF} shows the objective function value $f(\mPhi_k,\mG_k)$ and the values of $\|\mPhi_{k+1} -\mPhi_{k}\|_F$ and $\|\mG_{k+1} -\mG_{k}\|_F$ versus number of iteration. We see $f(\mPhi_k,\mG_k)$ decays steadily and $\|\mPhi_{k+1} -\mPhi_{k}\|_F$ and $\|\mG_{k+1} -\mG_{k}\|_F$ decrease to 0 linearly. This coincides with our theoretical  analysis. 

{Next, we discuss the choice of $\lambda$. As we mentioned in Section \ref{sec:preliminary}, $\lambda$ is used to balance the importance of mutual coherence and the robustness of SRE. In \Cref{figure:chooselambda}, we show the optimal value of $\lambda$ for CS$_{sparse}$ with different SNR.\footnote{The optimal $\lambda$ means the corresponding sensing matrix yielding a highest recovery accuracy.} We observe that the optimal $\lambda$ becomes large when the SNR is low coinciding with our expectation.}

\begin{figure}[!htb]
	\centering
	\includegraphics[height=3.5cm]{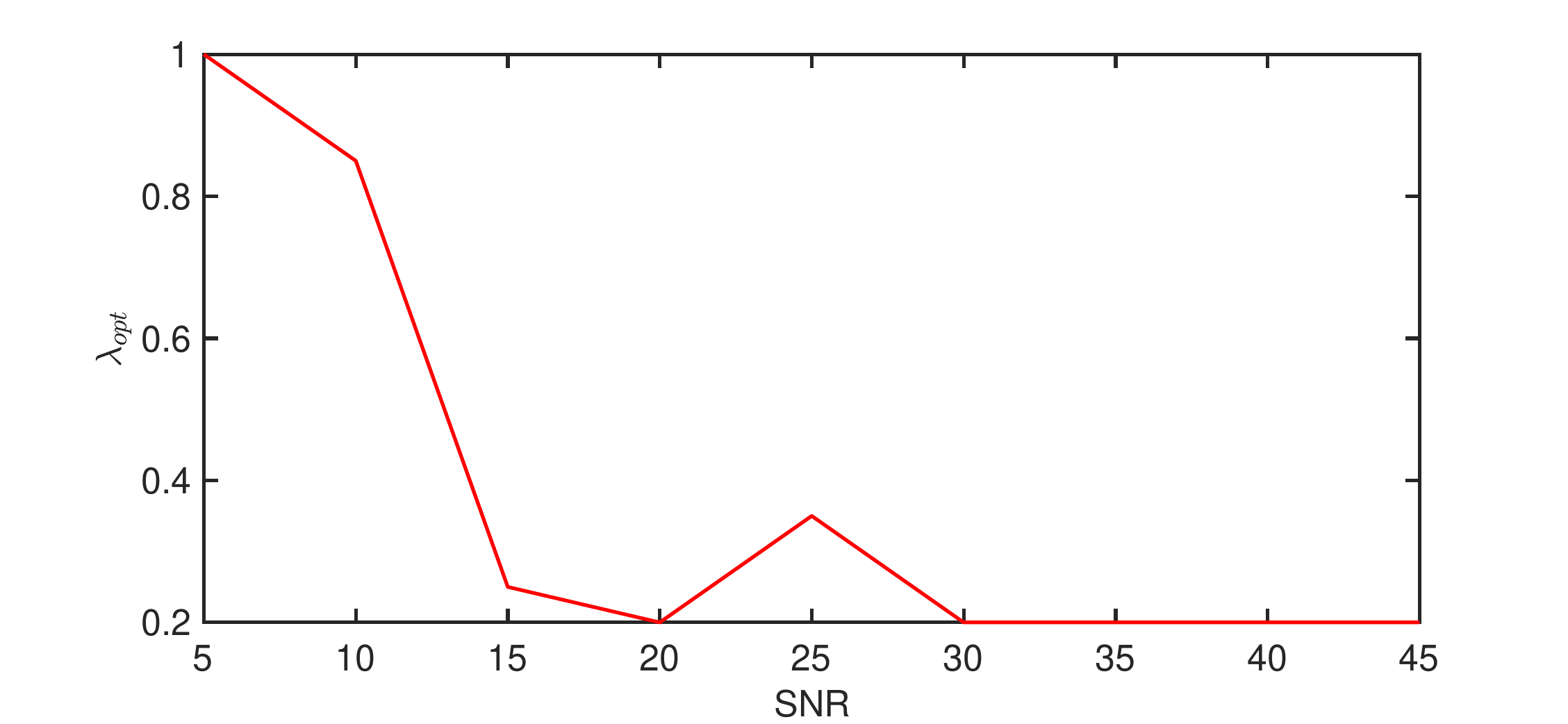}
	\caption{The value of optimal $\lambda$ versus varying SNR for CS$_{sparse}$.  Here, we set $M = 25, N = 60, L = 80,K=4, J = 2000$, and $\kappa =20$.}\label{figure:chooselambda}
\end{figure}

We then compare the performance of each CS system with varying SNR level in \eqref{eq:generate signal}. The results are displayed in \Cref{fig5}.\footnote{For synthetical experiments, we set the base matrix $\mA$ as an identity matrix. We will exploit the performance of adding the DCT matrix as the base matrix for natural images.}  
This experiment indicates  CS$_{MT}$, CS$_{MT-ETF}$, CS$_{sparse}$ and CS$_{sparse-ETF}$ outperform the others when SNR $<25$dB.  We also see CS$_{MT-ETF}$ and CS$_{sparse-ETF}$ outperforms CS$_{MT}$ and CS$_{sparse}$ when SNR is high, respectively. Despite the high performance of  CS$_{LYZCB}$ when SNR $>25$dB, it decays fast as SNR decreases, which reveals CS$_{LYZCB}$ is not robust to SRE.  Interestingly, the corresponding sparse sensing matrices CS$_{sparse}$ and CS$_{sparse-ETF}$ have comparable performance as CS$_{MT}$ and CS$_{MT-ETF}$, and are much better than CS$_{randn}$ and CS$_{bispar}$.


\begin{figure}[!htb]    
	\subfigure[]{ \includegraphics[scale=0.26]{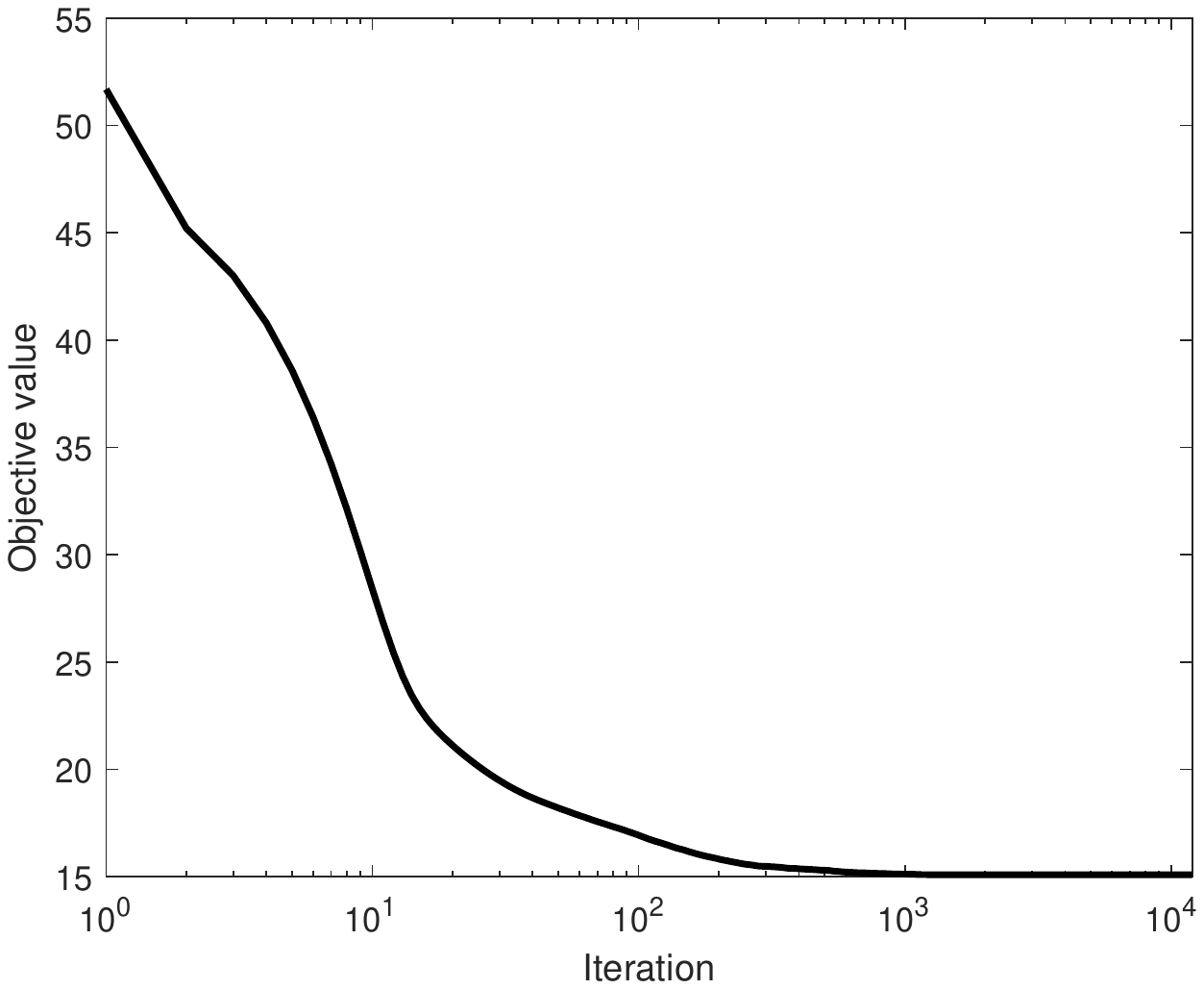}}
	\subfigure[]{ \includegraphics[scale=0.26]{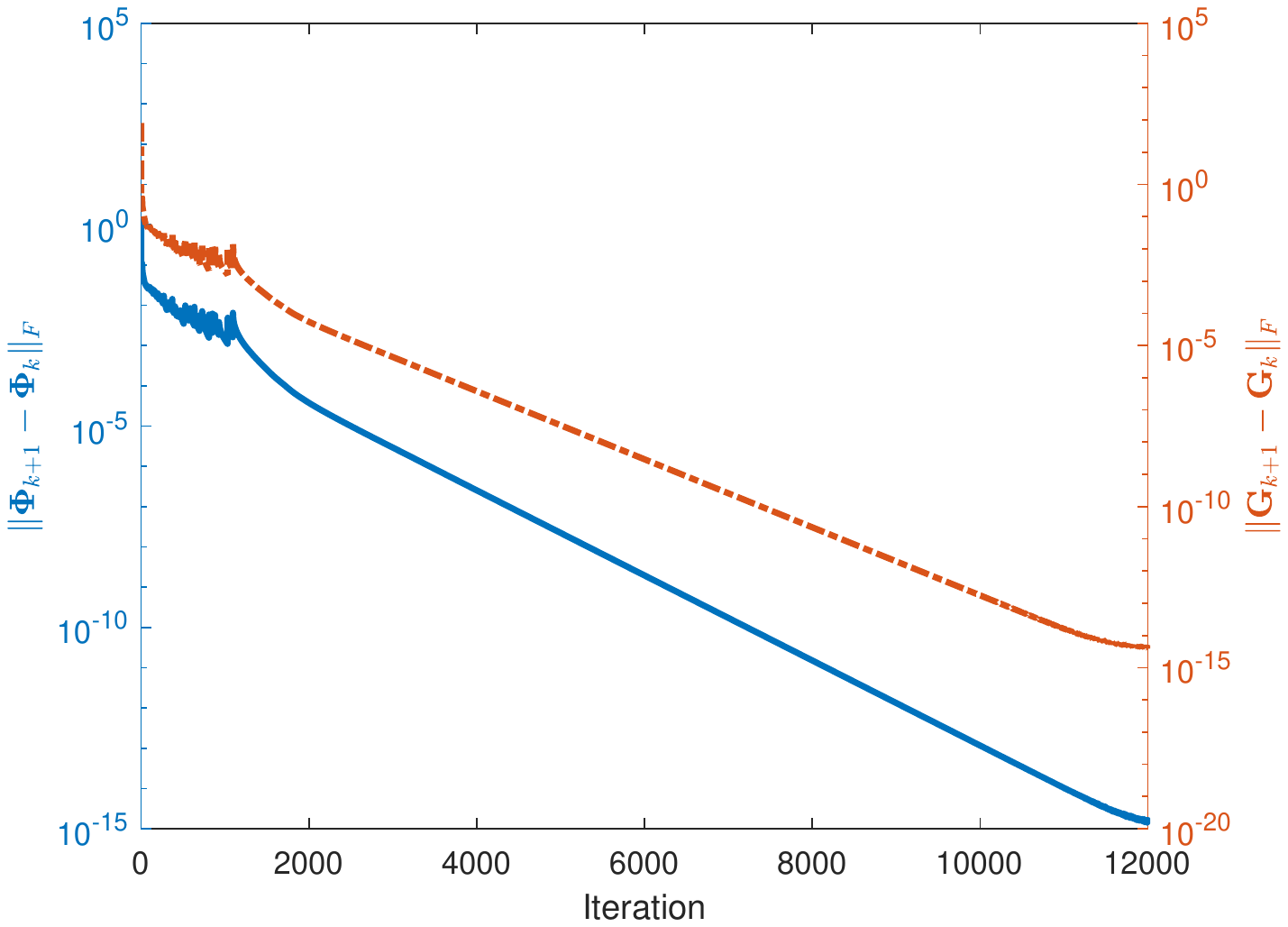}}
	\caption{Convergence of Algorithm \ref{alg: for G and Phi} {in terms of (a) objective value, (b) the change  of iterates $\{\mPhi\}_k$ (blue line) and $\{\mG\}_k$ (red line)}.  Here, we set $M = 25, N = 60, L = 80, \lambda  = 0.25$ and $\kappa =20$.}\label{fig:ETF}
\end{figure}

\begin{figure}[!htb]
	\centering
	\includegraphics[height=3.5cm]{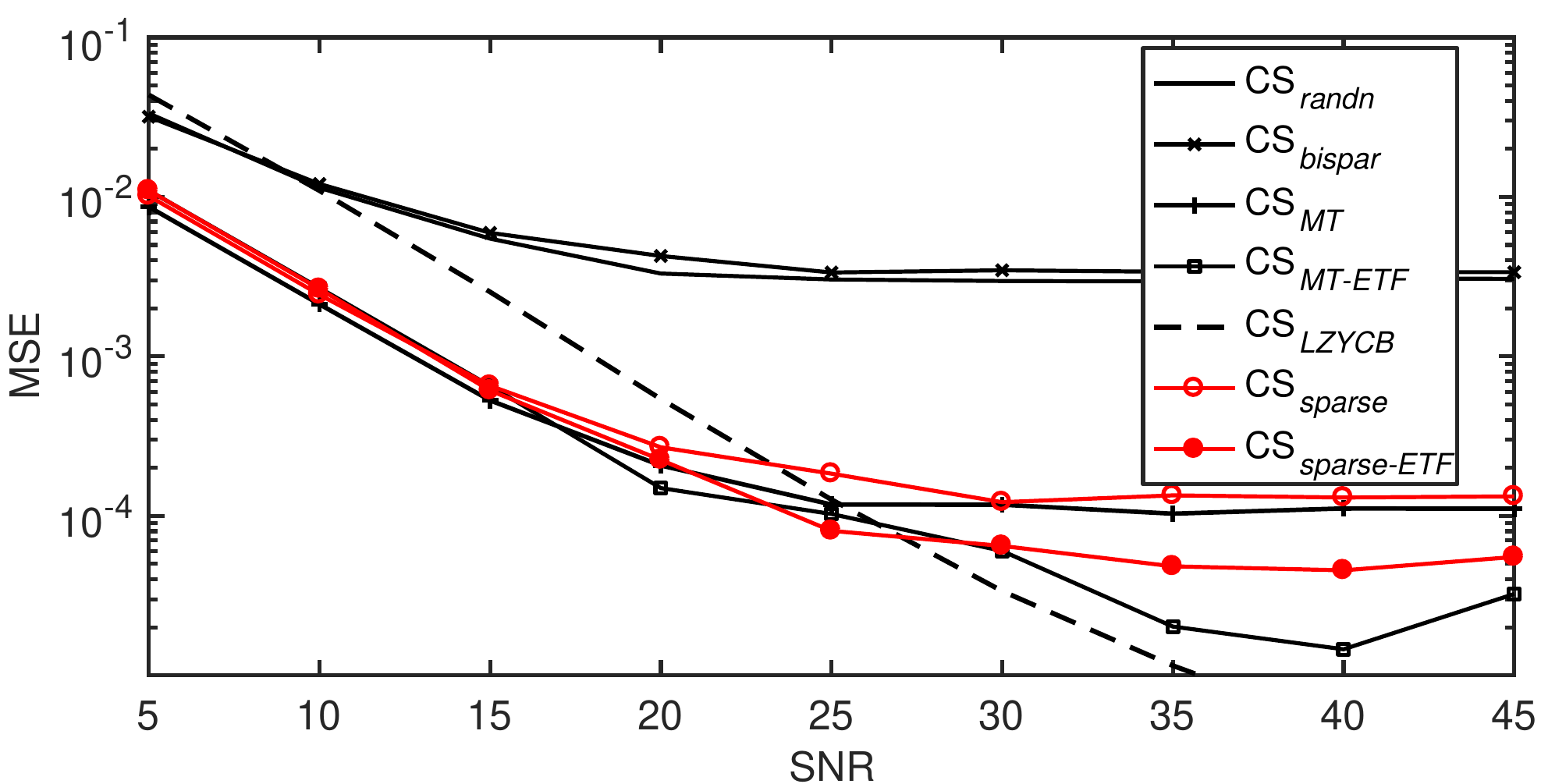}
	\caption{ MSE versus different SNR (dB) for the signals in \eqref{eq:generate signal}. Here, we set $M = 25, N = 60, L = 80,K=4, J = 2000,  \lambda  = 0.25$ and $\kappa =20$. Disappearance from this figure means MSE is less than $10^{-5}$. }\label{fig5}
\end{figure}

{Finally, we investigate the change of signal recovery accuracy with varying $M$ and $K$. In \Cref{figure:varyM,figure:VaryKSNR20}, we show the recovery accuracy versus various $M$ and $K$. We observe that our approach almost works the same as the dense one, CS$_{MT}$ and CS$_{MT-ETF}$.}

\begin{figure}[!htb]
	\centering
	\includegraphics[height=3.5cm]{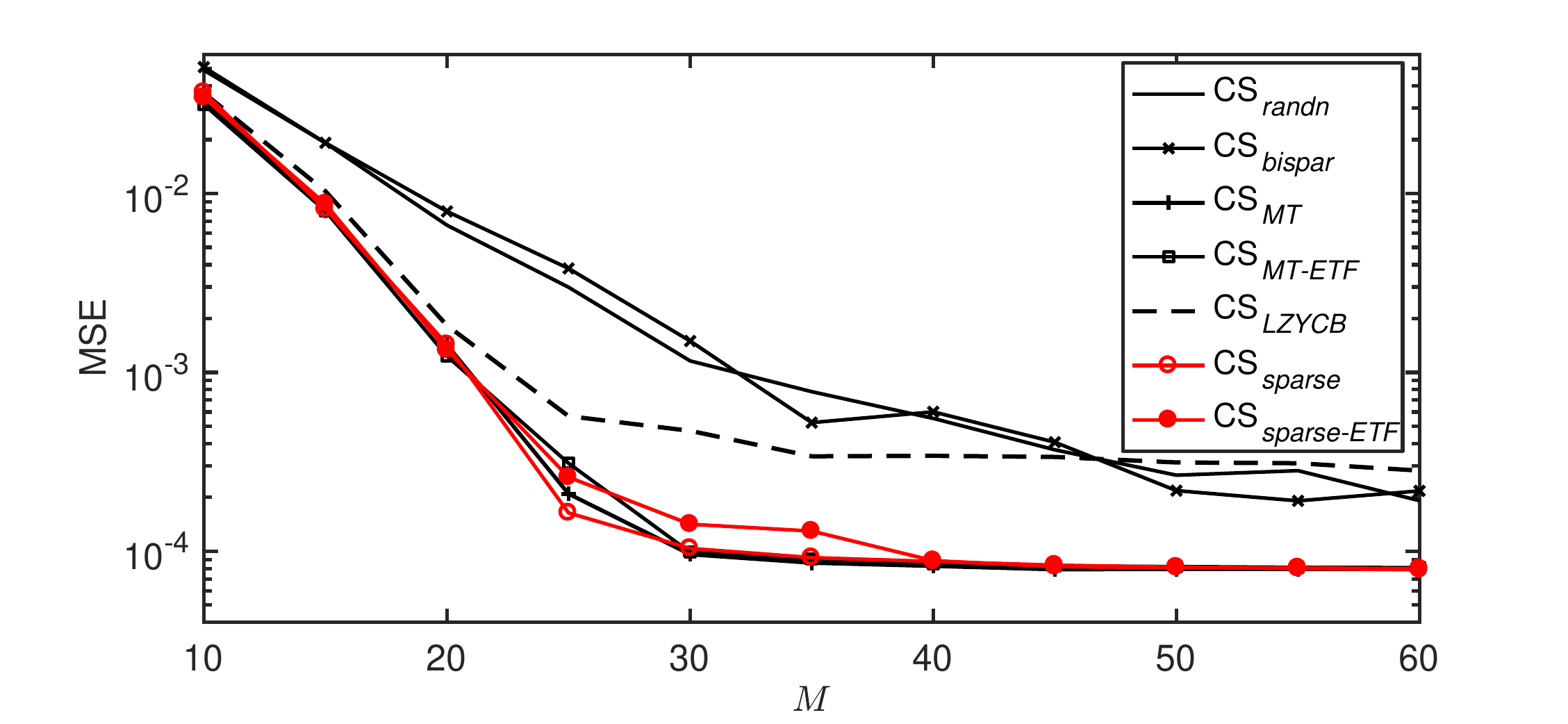}
	\caption{MSE versus different $M$ with SNR$=20$dB. Here, we set $N = 60, L = 80, K=4,J = 2000,  \lambda  = 0.25$ and $\kappa =20$.}\label{figure:varyM}
\end{figure}

\begin{figure}[!htb]
	\centering
	\includegraphics[height=3.5cm]{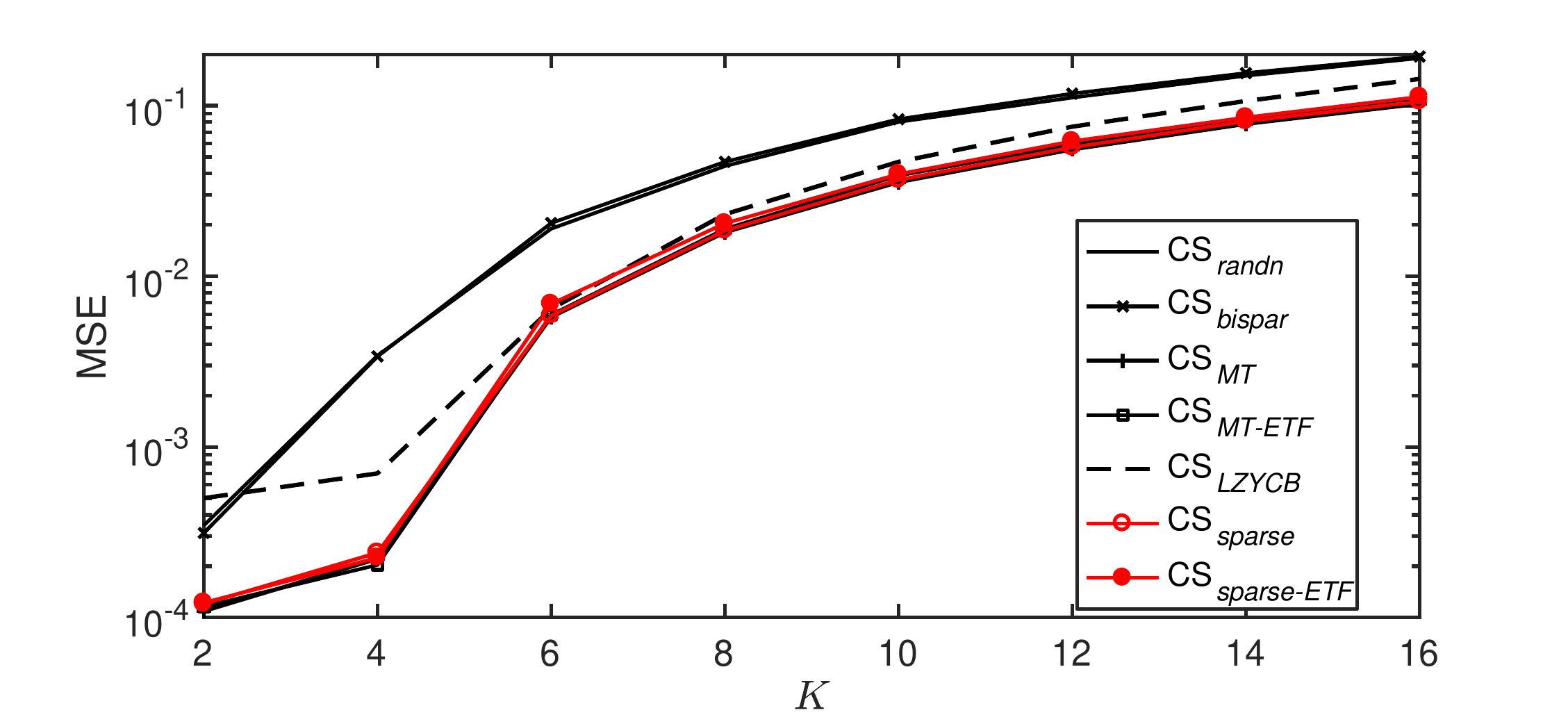}
	\caption{MSE versus different $K$ with SNR$=20$dB. Here, we set $M = 25, N = 60, L = 80, J = 2000,  \lambda  = 0.25$ and $\kappa =20$.}\label{figure:VaryKSNR20}
\end{figure}


\subsection{Real Images}

We now apply these CS systems to real image reconstruction from their sensing measurements.
We examine the performance of the CS systems with two different sizes of dictionaries, a low dimensional dictionary  $\mPsi\in\Re^{64\times 100}$ and a high dimensional one $\mPsi\in\Re^{256\times 800}$. 

The low dimensional  dictionary $\mPsi$ is learnt through KSVD algorithm \cite{aharon2006ksvd} with a set of $\sqrt{N}\times \sqrt{N}$ non-overlapping patches by extracting randomly 15 patches from each of 400 images in the LabelMe \cite{russell2008labelme} training data set. With each patch of $\sqrt{N}\times \sqrt{N}$ re-arranged as a vector of $N\times 1$, a set of $N\times 6000$ signals are obtained. Similarly, for learning a high dimensional dictionary, we extract more patches from the training dataset and obtains nearly $10^6$ signals {{since a high dimensional dictionary has much more parameters that need to be trained than a low dimensional dictionary}}. To address such a large training dataset, we choose the online dictionary learning algorithm \cite{mairal2009online}  to learn this dictionary. 


The performance of each CS system for real images is evaluated through Peak Signal to Noise Ratio (PSNR),
$$ 
\text{PSNR} \triangleq  10\times \log{10 \left[\frac{(2^r-1)^2}{\text{MSE}}\right]} \text{(dB)} 
$$
where $r=8$ bits per pixel and MSE is defined in \eqref{MSE}.

We choose several test images to demonstrate the reconstruction performance in terms of PSNR. Since patch-based processing of images will introduce the artifact on boundary called blockiness, the deblocking techniques can be introduced here to act as a post-processing step to reduce such an artifact. To this end, we utilize the BM3D denoising algorithm as post-processing to tackle the blockiness \cite{dabov2007image}. We observe that such a post-processing step not only improves the visual effect, but also increases the PSNR for each method. To illustrate the improvement of the PSNR, we list the amount of increased PSNR by the post-processing in Tables \ref{table:PSNR low dim} and \ref{table:PSNR high dim} and \Cref{fig:images2}. 

\begin{table*}[htb!]
	\footnotesize
	\caption{Statistics of PSNR (dB) for six images with $M=20,~N=64,~L=100,~K=4,~\lambda = 1.4$ using different $\kappa$. The performance of each CS system is described in two rows: the first row is the PSNR without post processing, while the second row is the amount of improved PSNR by post processing.} \label{table:PSNR low dim}
	\vspace{-.1in}
	\begin{center}
		\begin{tabular}{c||c|c||c|c||c|c||c|c||c|c||c|c}
			\hline\hline
			\multirow{2}{*}{} &
			
			\multicolumn{2}{c||}{Lena} &
			\multicolumn{2}{|c||}{Couple} &
			\multicolumn{2}{|c||}{Barbara} &
			\multicolumn{2}{|c||}{Child} &
			\multicolumn{2}{|c||}{Plane} &
			\multicolumn{2}{c}{Man} \\
			\cline{2-3} \cline{4-5} \cline{6-7}\cline{8-9}  \cline{10-11}  \cline{12-13}
			& $\kappa=10$    &$\kappa=20$      & $\kappa=10$     &$\kappa=20$   & $\kappa=10$     &$\kappa=20$  & $\kappa=10$     &$\kappa=20$     & $\kappa=10$     &$\kappa=20$     & $\kappa=10$     &$\kappa=20$    \\
			
			\hline
			\multirow{2}{*}{ CS$_{randn}$ }      &\multicolumn{2}{c||}{$29.69$}   &\multicolumn{2}{c||}{ $27.01$} &\multicolumn{2}{c||}{$22.44$}    &\multicolumn{2}{c||}{$31.20$}  &\multicolumn{2}{c||}{$28.57$}   &\multicolumn{2}{c}{ $27.41$} \\
			
			
			&\multicolumn{2}{c||}{$+1.12$}   &\multicolumn{2}{c||}{ $+0.95$} &\multicolumn{2}{c||}{$+0.52$}    &\multicolumn{2}{c||}{$+1.27$}  &\multicolumn{2}{c||}{$+1.08$}   &\multicolumn{2}{c}{ $+1.04$}\\
			
			\hline
			
			\multirow{2}{*}{ CS$_{MT}$ } &\multicolumn{2}{c||}{ $32.75$}    &\multicolumn{2}{c||}{$30.01$}   &\multicolumn{2}{c||}{$25.36$}  &\multicolumn{2}{c||}{$34.22$}   &\multicolumn{2}{c||}{$31.60$}   &\multicolumn{2}{c}{ $30.39$ }\\

			&\multicolumn{2}{c||}{ $+1.31$}    &\multicolumn{2}{c||}{$+1.05$}   &\multicolumn{2}{c||}{$+0.45$}  &\multicolumn{2}{c||}{$+1.72$}   &\multicolumn{2}{c||}{$+0.78$}   &\multicolumn{2}{c}{ $+0.32$ }\\

			\hline
			
			\multirow{2}{*}{  CS$_{LZYCB}$  }
			&\multicolumn{2}{c||}{$12.74$ } &\multicolumn{2}{c||}{$10.38$  }  
			&\multicolumn{2}{c||}{$~4.19$}   &\multicolumn{2}{c||}{$15.93$} 
			&\multicolumn{2}{c||}{$14.51$ }   &\multicolumn{2}{c}{$~9.75$}  \\

			&\multicolumn{2}{c||}{$+1.58$ } &\multicolumn{2}{c||}{$+2.08$  }  
			&\multicolumn{2}{c||}{$+7.32$}   &\multicolumn{2}{c||}{$+1.76$} 
			&\multicolumn{2}{c||}{$+1.33$ }   &\multicolumn{2}{c}{$+3.20$}  \\

			\hline
			\multirow{2}{*}{ CS$_{bispar}$ }     &$29.36$  &$29.27$    &$26.85$   &$26.87$   &$22.42$  &$22.49$   &$30.80$   &$30.87$  &$28.13$  &$28.28$   &$27.19$   &$27.14$ \\
			
			&$+1.23$  &$+1.24$    &$+0.97$   &$+1.03$   &$+0.65$  &$+0.61$   &$+1.56$   &$+1.46$  &$+0.78$  & $+0.90$  &$+1.07$  &$+1.09$ \\
			
			\hline
			
			\multirow{2}{*}{ CS$_{sparse-A}$}  &$32.38$  &$32.65$    &$29.63$   &$29.88$   &$24.91$ &$25.19$    &$33.84$   &$34.12$  &$31.28$  &$31.52$   &$30.01$  &$30.27$\\

			&$+1.86$  &$+1.34$    &$+1.31$   &$+1.07$   &$+1.04$ &$+0.47$    &$+2.03$   &$+1.70$  & $+0.85$ & $+1.64$  &$+0.44$  & $+0.38$\\
			\hline
			\multirow{2}{*}{ CS$_{sparse}$ }     &$32.26$  &$32.56$    &$29.47$   &$29.79$   &$24.76$ &$25.11$    &$33.75$   &$34.07$  &$31.15$  &$31.48$   &$29.83$  &$30.15$ \\
			
			&$+1.26$  &$+1.33$    &$+1.01$   &$+1.02$   &$+0.54$ &$+0.47$    &$+1.71$   &$+1.68$  &$+0.86$   & $+1.21$ &$+0.61$  & $+0.40$ \\
			\hline \hline
			
		\end{tabular}
	\end{center}
\end{table*}

\begin{table*}[htb!]
	\footnotesize
	\caption{Similar to \Cref{table:PSNR low dim}, but with  $M=80,~N=256,~L=800,~K = 16,~\lambda = 0.5$.}\label{table:PSNR high dim}
	\vspace{-.1in}
	\begin{center}
		\begin{tabular}{c||c|c|| c|c ||c|c ||c|c ||c|c || c|c}
			\hline\hline
			\multirow{2}{*}{} &
			\multicolumn{2}{c||}{Lena} &
			\multicolumn{2}{|c||}{Couple} &
			\multicolumn{2}{|c||}{Barbara} &
			\multicolumn{2}{|c||}{Child} &
			\multicolumn{2}{|c||}{Plane} &
			\multicolumn{2}{c}{Man} \\
			
			\cline{2-3}       \cline{4-5}  \cline{6-7} \cline{8-9}  \cline{10-11}  \cline{12-13}
			& $\kappa=10$  &$\kappa=30$   & $\kappa=10$   &$\kappa=30$  & $\kappa=10$     &$\kappa=30$  & $\kappa=10$     &$\kappa=30$     & $\kappa=10$     &$\kappa=30$     & $\kappa=10$     &$\kappa=30$    \\
			
			\hline
			\multirow{2}{*}{ CS$_{randn}$  }     &\multicolumn{2}{c||}{$30.73$}    &\multicolumn{2}{c||}{$27.40$}  &  \multicolumn{2}{c||}{$22.76$}  &\multicolumn{2}{c||}{$31.99$}  &\multicolumn{2}{c||}{$29.57$ }   &\multicolumn{2}{c}{$27.94$} \\
			
			&\multicolumn{2}{c||}{$+0.69$}    &\multicolumn{2}{c||}{$+0.51$}  &  \multicolumn{2}{c||}{$+0.21$}  &\multicolumn{2}{c||}{$+0.59$}  &\multicolumn{2}{c||}{$+0.48$ }   &\multicolumn{2}{c}{$+0.48$} \\
			
			\hline
			\multirow{2}{*}{CS$_{MT}$}           &\multicolumn{2}{c||}{$34.38$}    &\multicolumn{2}{c||}{$30.90$}   &\multicolumn{2}{c||}{ $26.04$}    &\multicolumn{2}{c||}{$35.60$} &\multicolumn{2}{c||}{$33.34$ }  &\multicolumn{2}{c}{$31.50$ } \\
			
			&\multicolumn{2}{c||}{$+0.14$}    &\multicolumn{2}{c||}{$+0.23$}   &\multicolumn{2}{c||}{ $+0.13$}    &\multicolumn{2}{c||}{$+0$} &\multicolumn{2}{c||}{$+0.24$ }  &\multicolumn{2}{c}{$+0.20$ } \\

			\hline
			\multirow{2}{*}{CS$_{bispar}$}     &$30.23$  &$30.72$    &$27.33$   &$27.36$   &$22.57$  &$22.64$    &$31.70$   &$31.94$  &$29.29$  &$29.49$   &$27.59$  &$27.82$ \\
			
			&$+0.67$  &$+0.67$    &$+0.50$   &$+0.52$   &$+0.21$  &$+0.20$    &$+0.58$   &$+0.59$  &$+0.48$  &$+0.48$   &$+0.46$  &$+0.47$ \\

			\hline
			\multirow{2}{*}{CS$_{sparse-A}$}   &$33.89$  &$34.24$    &$30.47$   &$30.75$   &$25.44$  &$25.82$    &$35.09$   &$35.36$  &$32.92$  &$33.15$   &$30.95$  &$31.22$\\
			
			&$+0.34$  &$+0.22$    &$+0.37$   &$+0.29$   &$+0.18$  &$+0.16$    &$+0.25$   &$+0.08$  &$+0.38$  &$+0.29$   &$+0.35$  &$+0.27$\\
			
			\hline
			\multirow{2}{*}{CS$_{sparse}$}  &$33.17$  &$33.50$    &$29.69$   &$29.94$   &$24.26$  &$24.60$    &$34.38$   &$34.66$  &$32.41$  &$32.61$   &$30.06$  &$30.39$ \\
			
			&$+0.45$  &$+0.43$    &$+0.42$   &$+0.40$   &$+0.15$  &$+0.16$    &$+0.24$   &$+0.21$  &$+0.38$  &$+0.38$   &$+0.35$  &$+0.36$ \\
			
			\hline \hline
		\end{tabular}
	\end{center}
\end{table*}

With image €œLena, we show the PSNR versus the sparsity $\kappa$ (the number of non-zero elements in each row of the sparse sensing matrix) in \Cref{fig:PSNR high dim}.  And furthermore, we list the performance statistics on other images {including Couple, Barbara, Child, Plane, and Man} in Tables \ref{table:PSNR low dim} and \ref{table:PSNR high dim}. \Cref{fig:images2}  displays the visual result of ``couple''.

As expected, CS$_{sparse-A}$ and CS$_{sparse}$ yield higher PSNR when increases the sparsity $\kappa$. It is interesting to see that although the sparsity is very low, for example, $\kappa = 10$, CS$_{sparse-A}$ is only 0.53dB inferior to CS$_{MT}$ and still has more than $3$dB better than CS$_{randn}$ which is a dense sensing matrix.
We note that the gap between CS$_{sparse-A}$ and CS$_{MT}$ is almost negligible (with 0.15 dB) when $\kappa \geq 30$. This meets our argument that we can design a sparse sensing matrix instead of a dense matrix resulting in similar performance so that we can reduce the computational cost for sensing signals.  

\Cref{fig:PSNR high dim} and \Cref{table:PSNR low dim,table:PSNR high dim} indicate  CS$_{sparse-A}$ has better performance than CS$_{sparse}$ because of $\mA$  when $\kappa$ is small. This demonstrates the effect of utilizing the auxiliary DCT matrix for designing a structured sparse sensing matrix to increase the reconstruction accuracy. 

It is not surprising to note that CS$_{LZYCB}$ yields very low PSNR for real image experiments. This coincides with \Cref{fig5} and further demonstrates that the sensing matrix in CS$_{LZYCB}$ is not robust to the SRE. But we observe that the proposed sparse sensing matrices are robust to the SRE and hence the CS$_{sparse}$ and CS$_{sparse-A}$ have higher PSNR.

Comparing the results in Table~\ref{table:PSNR low dim} and Table~\ref{table:PSNR high dim},
we observe that with a high dimensional dictionary, higher PSNR can be obtained. It is of great interest to note that the proposed sparse sensing matrix becomes extremely efficient for high dimensional patches since it can significantly reduce the sensing costs.


\begin{figure}[!htb]
	\centering
	\includegraphics[height=3.5cm]{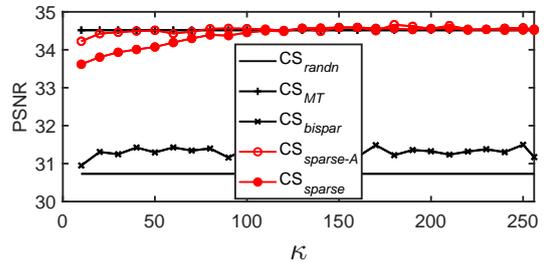}
	\caption{PSNR(dB) versus sparsity $\kappa$ for different CS systems { without post-processing} on image Lena. Here, $M = 80, N = 256, L = 800, K=16,\lambda  = 0.5$. }\label{fig:PSNR high dim}
\end{figure}

\begin{figure}
	\centering
	\includegraphics[width=0.25\textwidth]{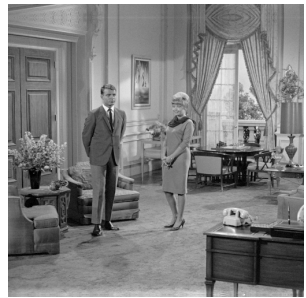}
	\caption{The Original test image: ``Couple''.}
	\label{fig:original:couple}
\end{figure}

\begin{figure*}
	\centering
	
	\vspace{-0.2cm}
	\subfigure[CS$_{randn}$: $27.40$]{\includegraphics[width=0.18\textwidth]{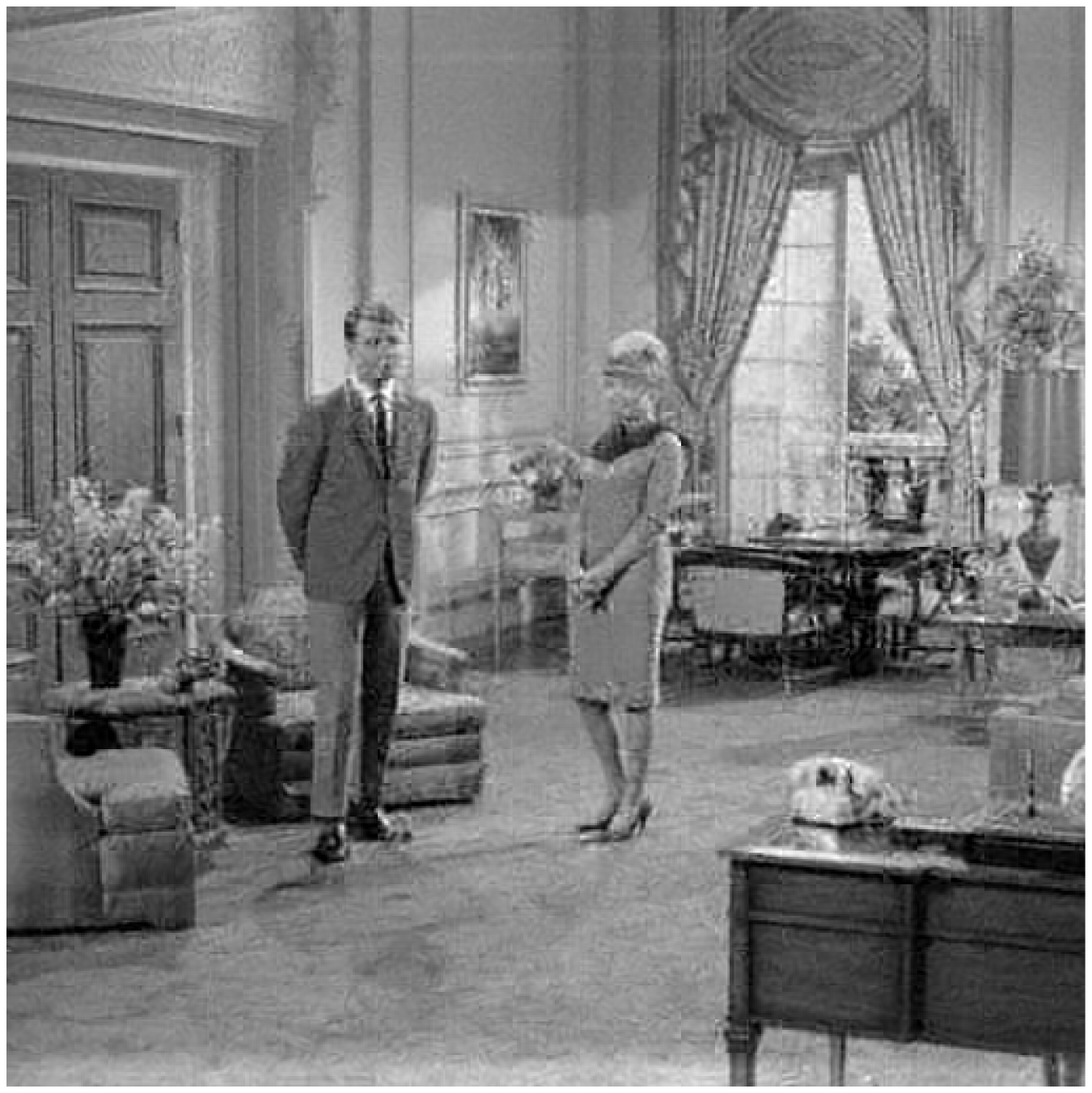}}
	\subfigure[CS$_{bispar}$: $27.33$]{\includegraphics[width=0.18\textwidth]{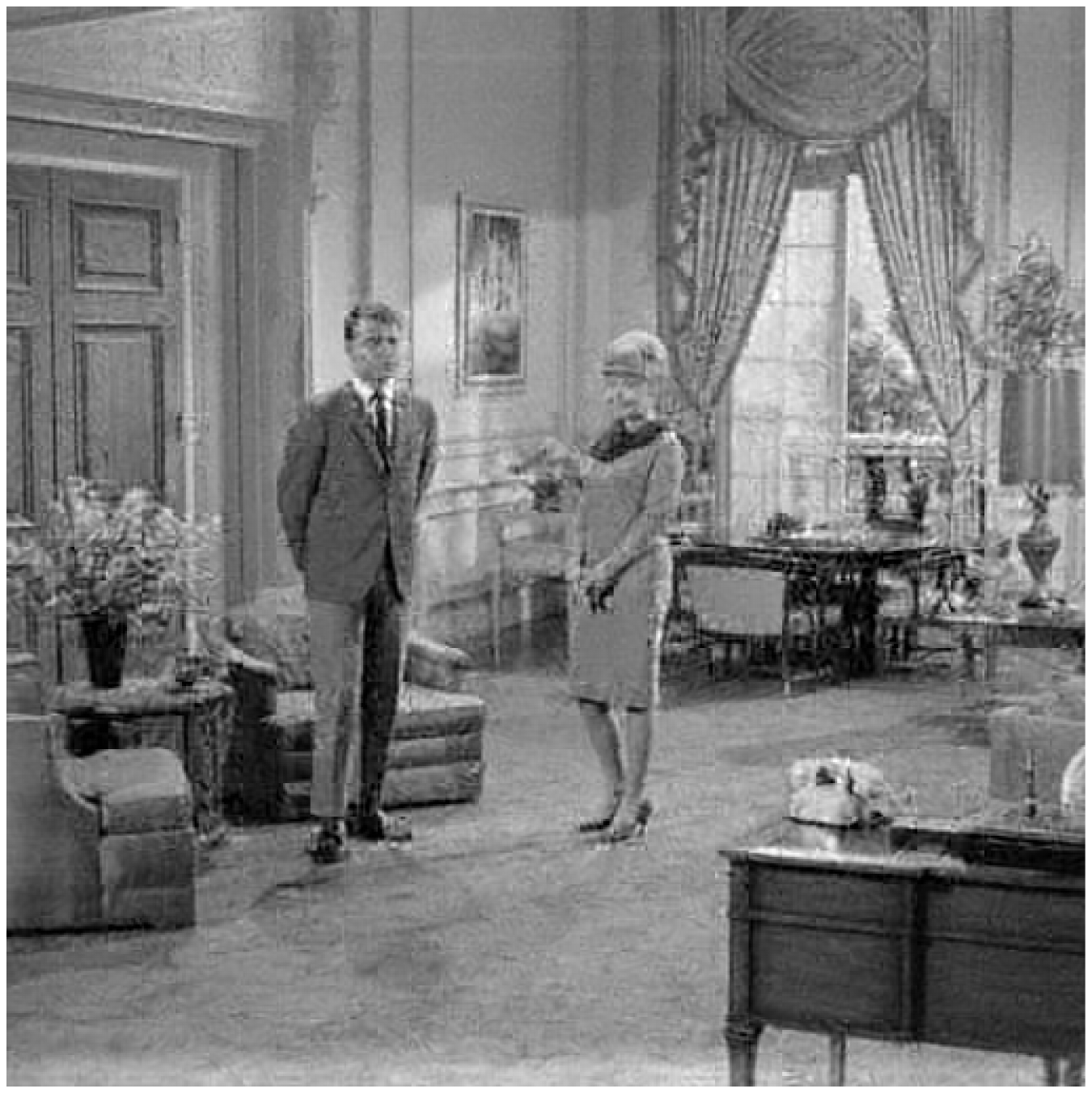}}
	\subfigure[CS$_{sparse}$: $29.69$]{\includegraphics[width=0.18\textwidth]{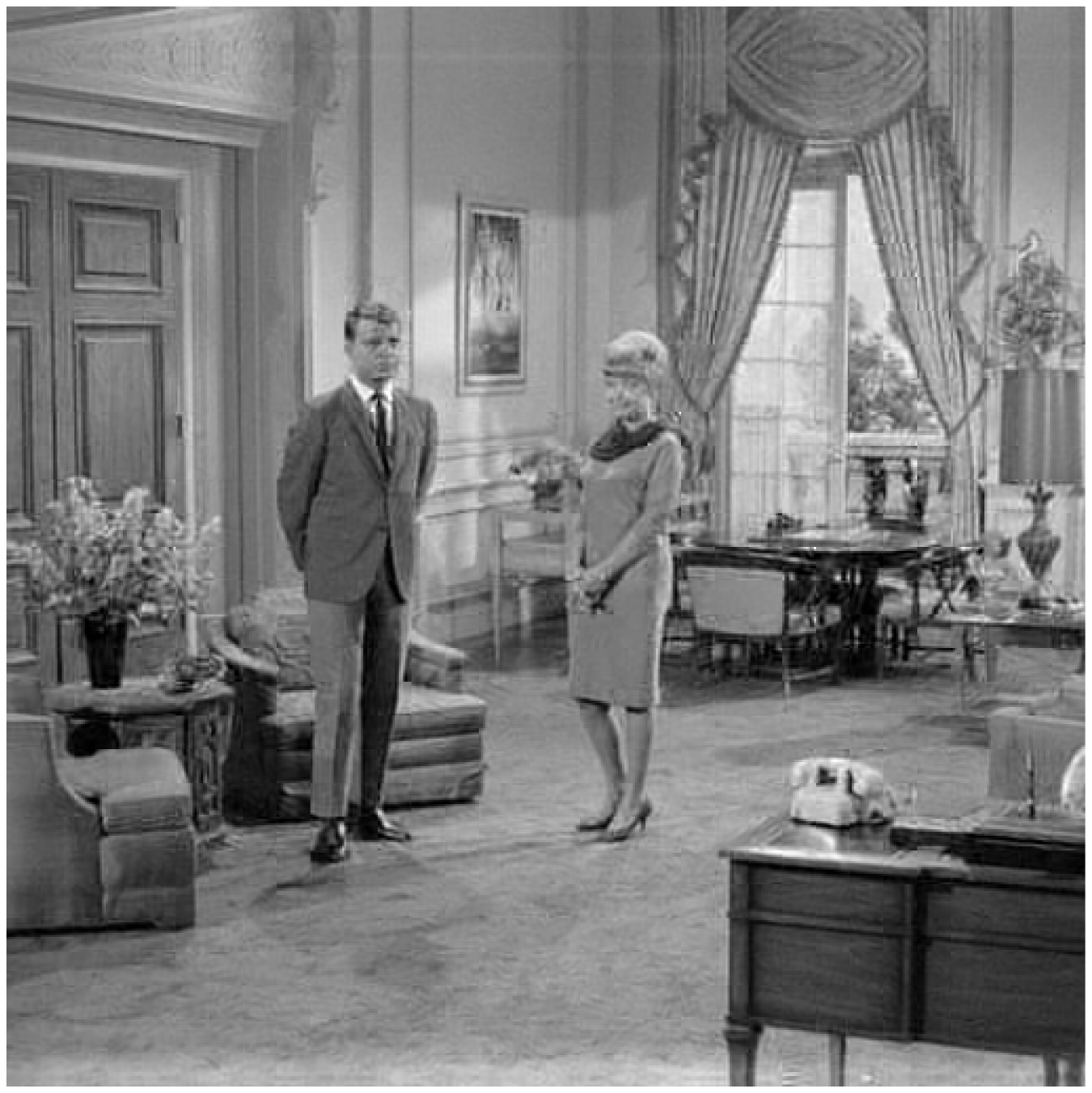}}
	\subfigure[{ CS$_{sparse-A}$: $30.47$}]{\includegraphics[width=0.18\textwidth]{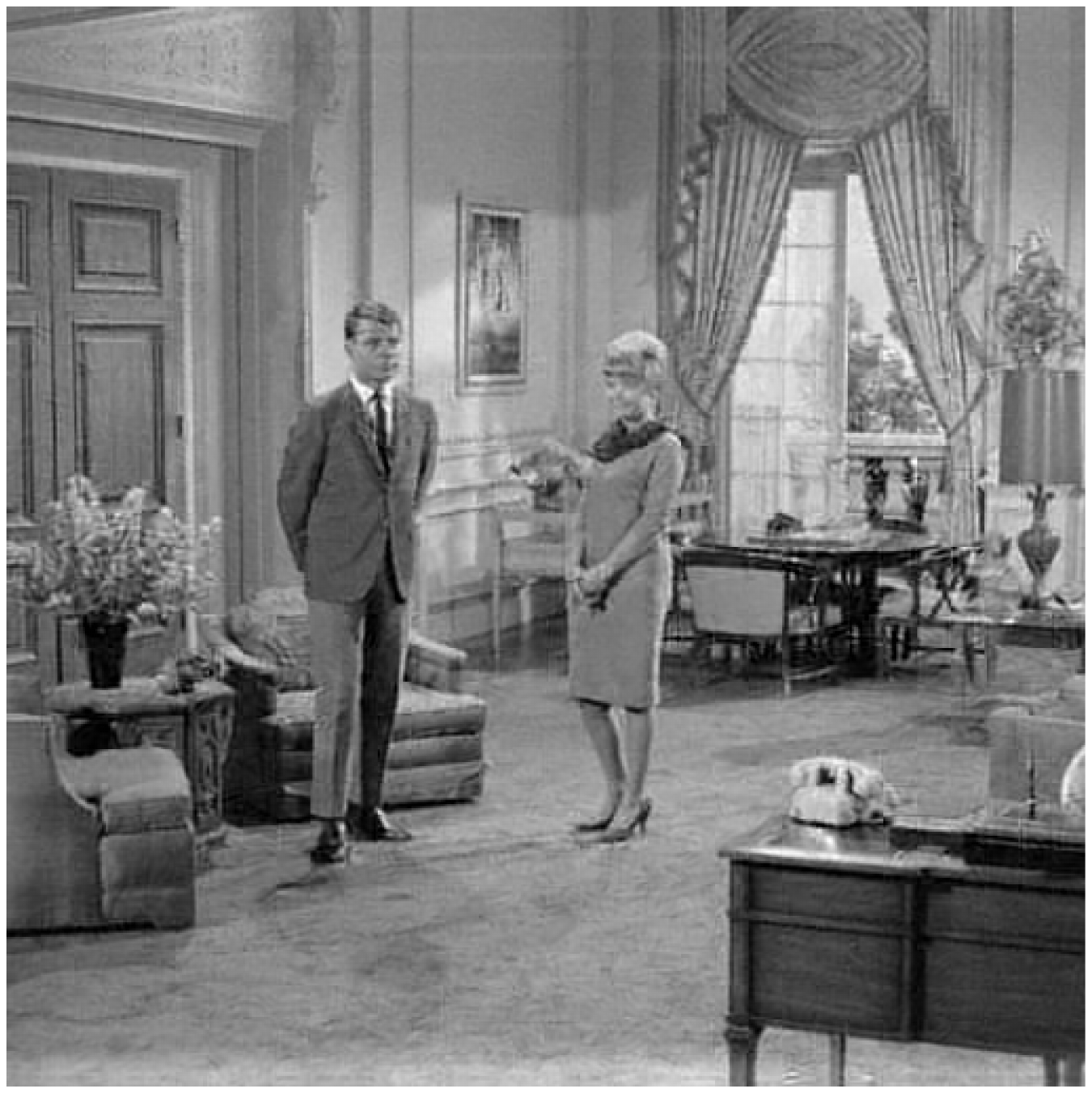}}
	\subfigure[CS$_{MT}$: $30.90$]{\includegraphics[width=0.18\textwidth]{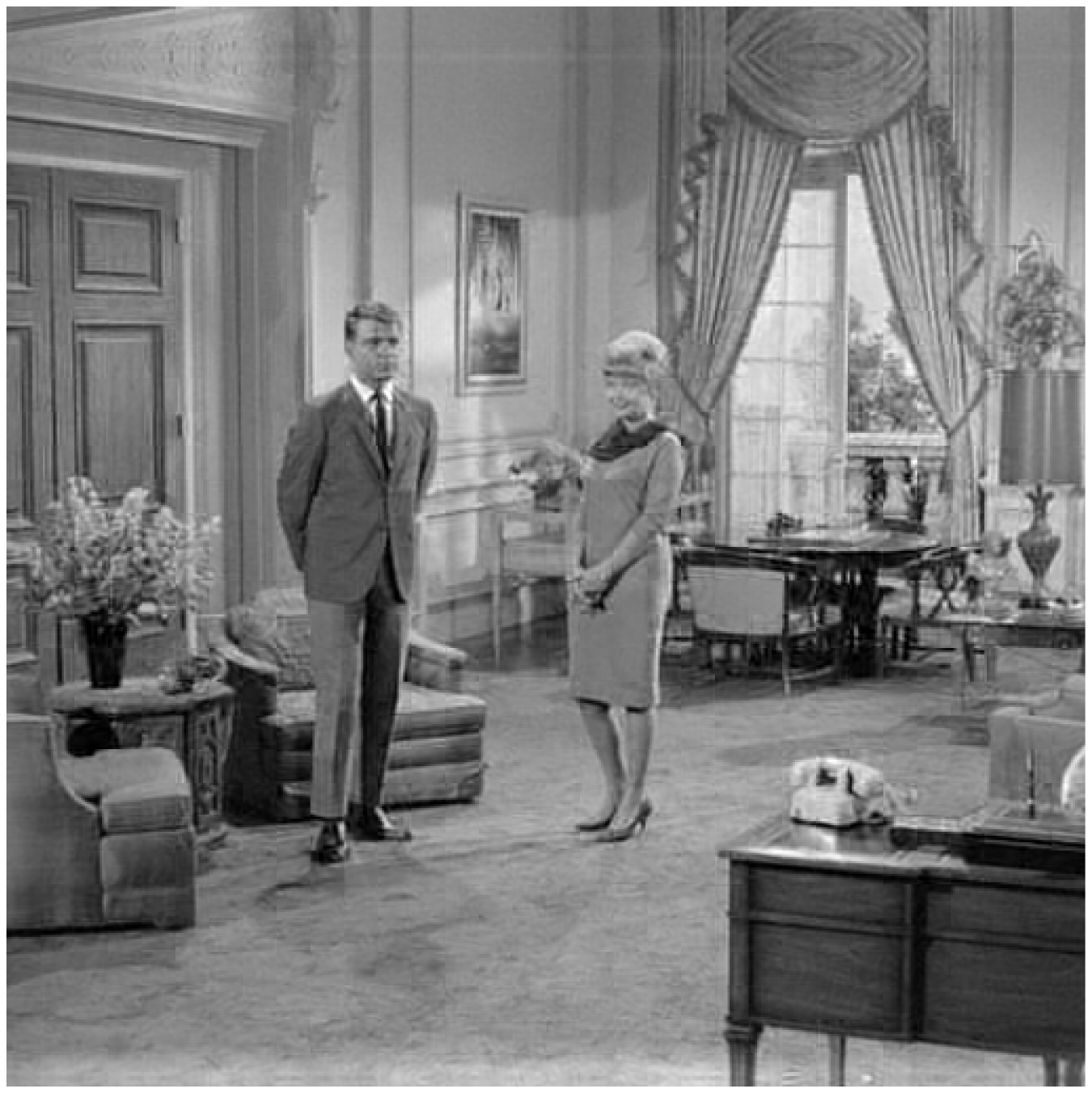}}

	\subfigure[$+0.51$]{\includegraphics[width=0.18\textwidth]{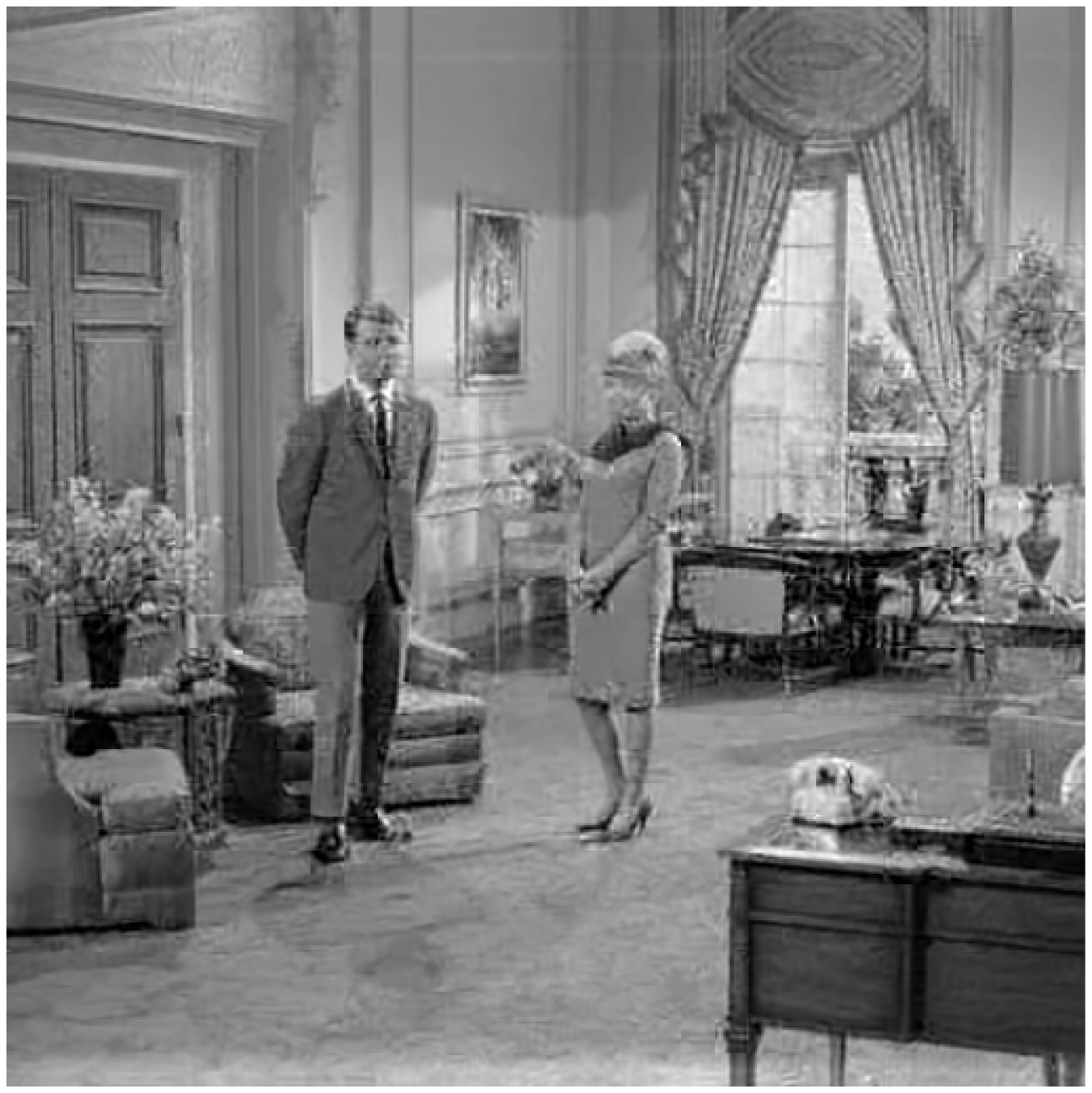}}    
	\subfigure[$+0.50$]{\includegraphics[width=0.18\textwidth]{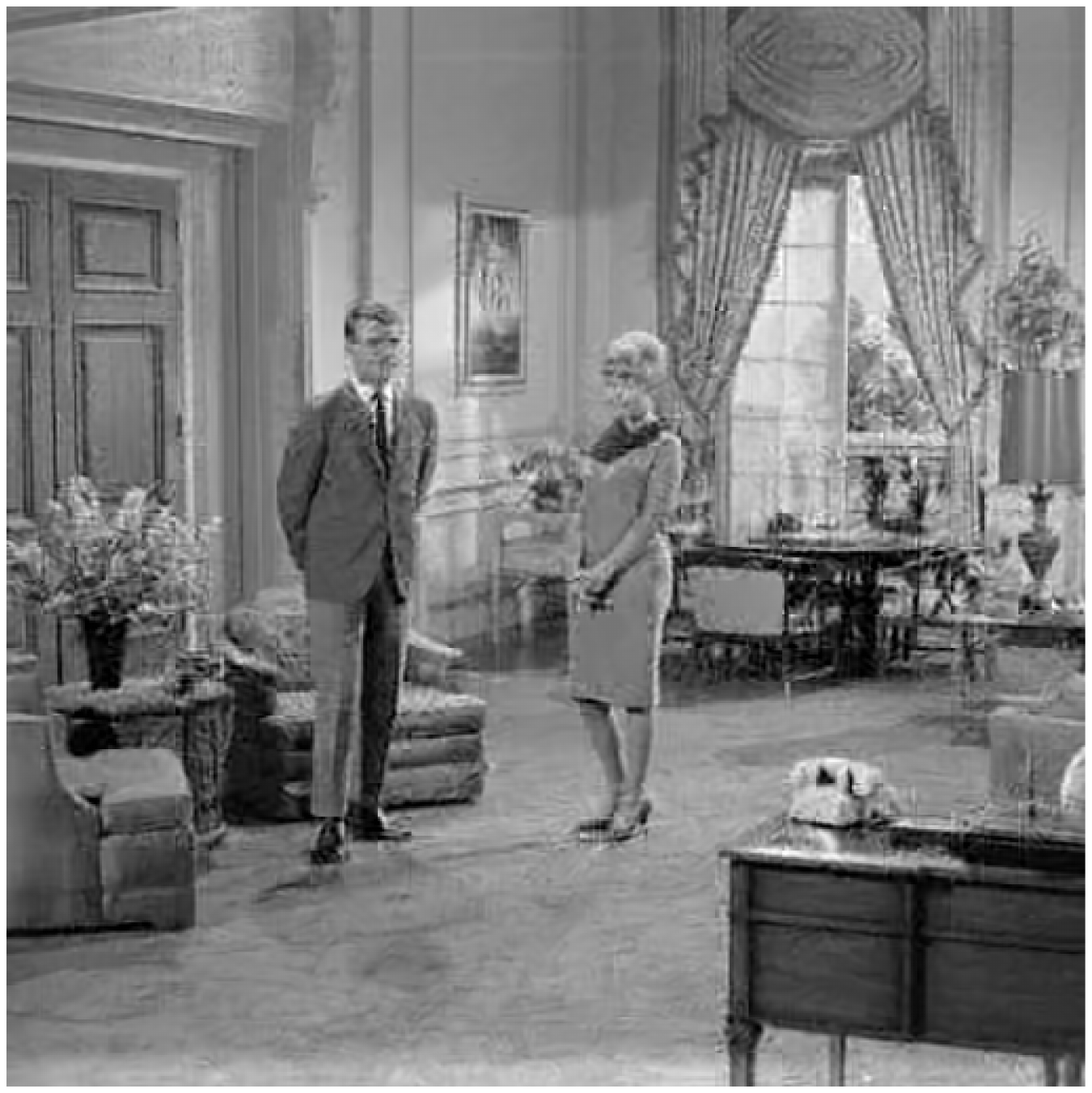}}    
	\subfigure[$+0.42$]{\includegraphics[width=0.18\textwidth]{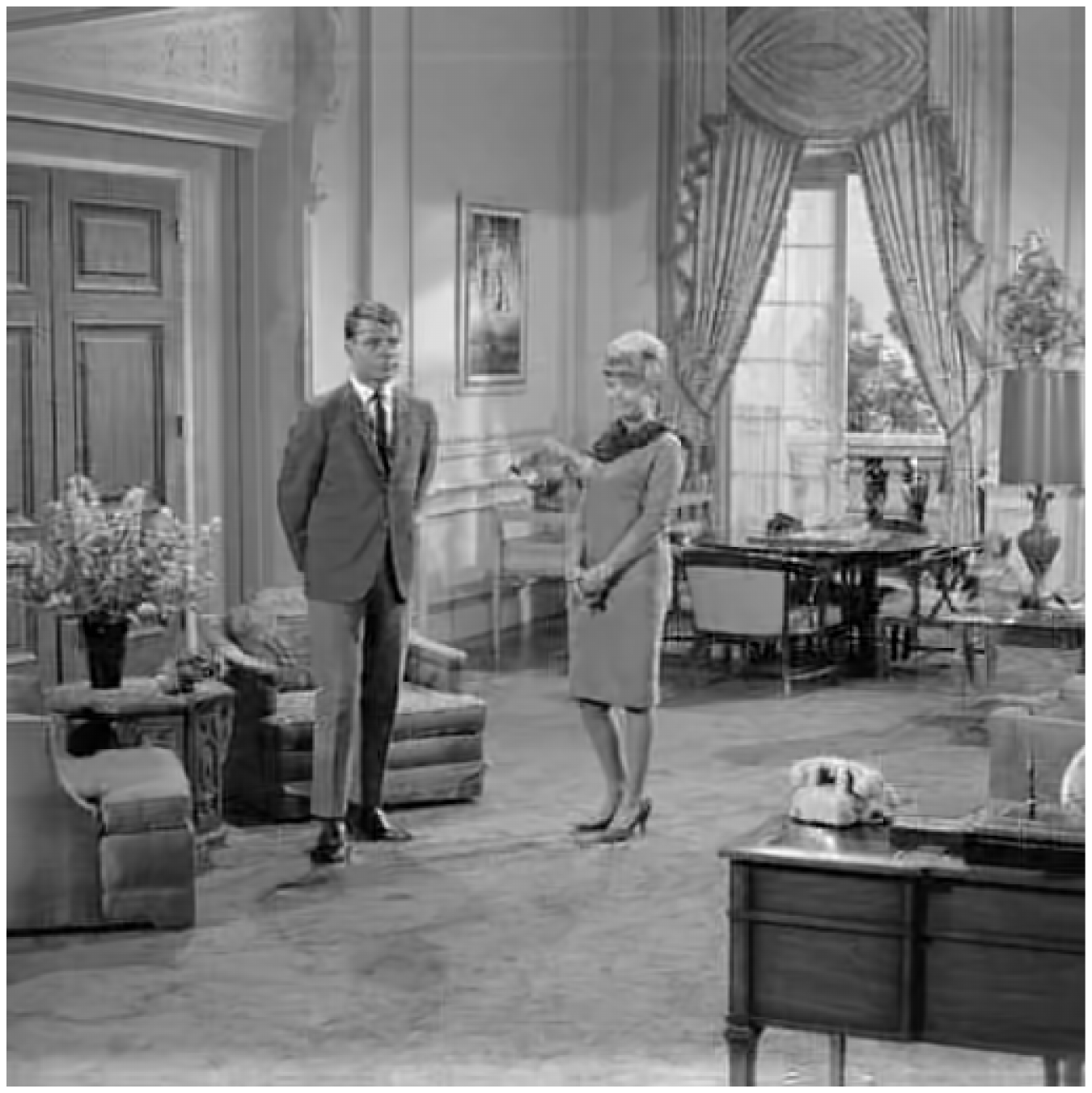}}
	\subfigure[$+0.37$]{\includegraphics[width=0.18\textwidth]{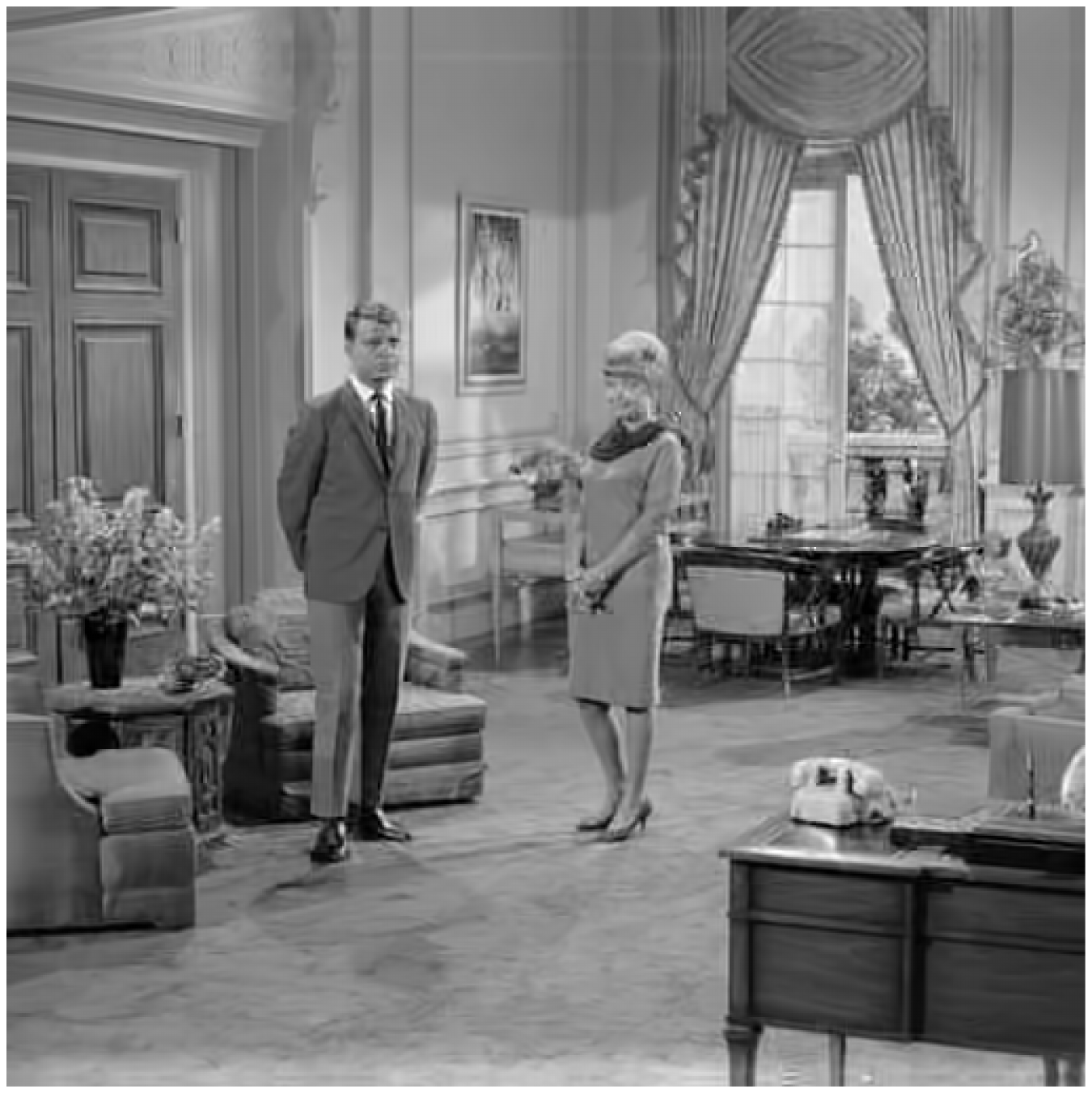}}
	\subfigure[$+0.23$]{\includegraphics[width=0.18\textwidth]{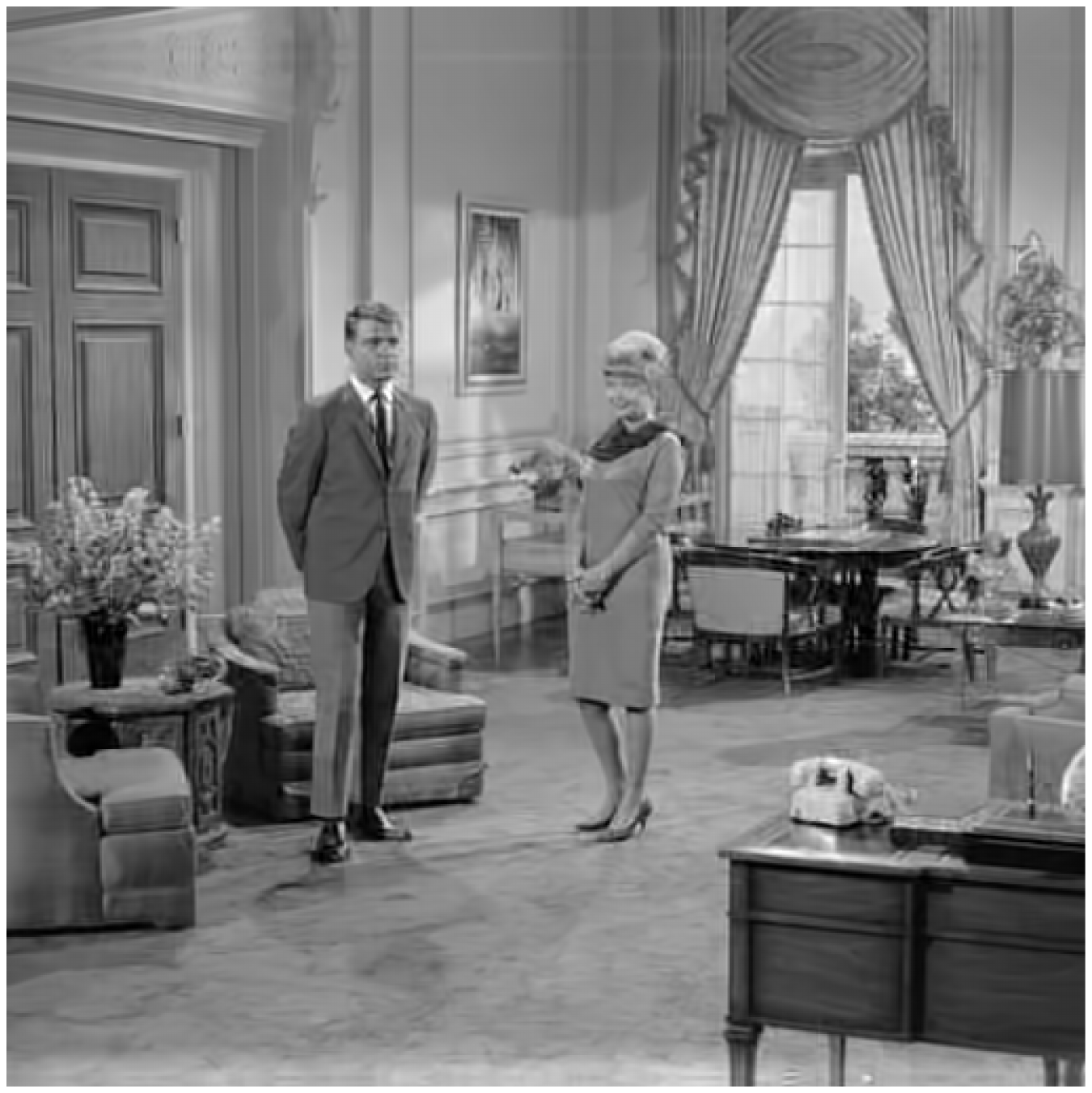}}
	
	\vspace{-0.2cm}
	\caption{The PSNR (dB) of reconstructed images for different CS systems with $M=80,~N=256,~L=800,~K = 16,~\lambda = 0.5,\kappa = 10$. Up: the PSNR without deblocking. Bottom: the amount of improved PSNR with deblocking.}
	\label{fig:images2} 
\end{figure*}

\section{Conclusions}
\label{sec:Conclusions}
We proposed a framework for designing a structured sparse sensing matrix that is robust to sparse representation error that widely exists for practical signals and can be efficiently implemented to sense signals. An alternating minimization-based algorithm is used to solve the optimal design problem, whose convergence is rigorously analyzed. The simulations demonstrate the promising performance of the proposed structured sparse sensing matrix in terms of signal reconstruction accuracy for synthetic data and real images.

As shown in Section~\ref{sec:simulations}, utilizing the base matrix $\mA$ can improve the performance of the obtained sparse sensing matrix, especially when the number of non-zeros in $\mPhi$ is very small. Thus, it is of interest to utilize a base matrix $\mA$ which has a few degrees of freedom (or parameters) that can be optimized, making it possible to simultaneously optimize the base matrix $\mA$ and the sparse matrix $\bm{\Phi}$. One choice of such an idea is to utilize a series of Givens rotation matrices, which have parameters for choosing and can be implemented very efficiently, to act as the base matrix. We refer this to future work. {It is also of interest to adopt the optimized structured sparse sensing matrix into analog-to-digital conversion system based on compressive sensing \cite{tropp2010beyond,davenport2012compressive}. Towards that end, it is important to develop a quantized (even 1-bit) sparse sensing matrix. 
We finally note that it remains an open problem to certify certain properties (such as the RIP) for the optimized sensing matrices \cite{elad2007optimized,duarte2009learning,abolghasemi2012gradient,li2013projection,chen2013projection,li2015designing,bai2015alternating,hong2016efficient,hong2017SP,li2017gradient,zhu2018collaborative}, which empirically outperforms a random one that satisfies the RIP. Works in these directions are ongoing.}

\section*{Acknowledgment}
This research is supported in part by ERC Grant agreement No. 320649, and in part by the Intel Collaborative Research Institute for Computational Intelligence (ICRI-CI).

\appendix
\section{Proof of \cref{lem:descent lemma}}
\label{prf:descent lemma}
\begin{proof}
	We parametrize the function value through the line passing $[\mPhi', \mPhi]$ by $t$, i.e.,  define $\upsilon(t) = f(\mPhi'+t(\mPhi-\mPhi'),\mG)$. It is clear that $\upsilon(0) = f(\mPhi',\mG), ~ \upsilon(1)=f(\mPhi,\mG)$. Then we have
	\begin{align*}
	&\upsilon(1) - \upsilon(0) = f(\mPhi,\mG) - f(\mPhi',\mG) = \int_{0}^{1}\upsilon'(t)dt \\
	& = \int_{0}^{1} \langle \nabla_{\mPhi} f(\mPhi'+t(\mPhi-\mPhi'),\mG), \mPhi-\mPhi'  \rangle   dt\\
	&= \int_{0}^{1} \langle \nabla_{\mPhi} f(\mPhi'+t(\mPhi-\mPhi'),\mG) - \nabla_{\mPhi} f(\mPhi',\mG), \mPhi-\mPhi'  \rangle   dt\\
	& \quad + \langle\nabla_{\mPhi} f(\mPhi',\mG), \mPhi-\mPhi'  \rangle \\
	&\leq  \int_{0}^{1} \| \nabla_{\mPhi} f(\mPhi'+t(\mPhi-\mPhi'),\mG)- \nabla_{\mPhi} f(\mPhi',\mG)\|_Fdt \\
	&\quad \cdot \|\mPhi-\mPhi' \|_F + \langle\nabla_{\mPhi} f(\mPhi',\mG), \mPhi-\mPhi'  \rangle \\
	&\leq L\|\mPhi-\mPhi' \|_F^2 \int_{0}^{1} tdt+ \langle\nabla_{\mPhi} f(\mPhi',\mG), \mPhi-\mPhi'  \rangle \\
	&=\frac{L}{2}\|\mPhi-\mPhi' \|_F^2 + \langle\nabla_{\mPhi} f(\mPhi',\mG), \mPhi-\mPhi'  \rangle,
	\end{align*}
	where in the last inequality we have used \eqref{eq:Lipschitz gradient}.
\end{proof}

\section{Proof of \cref{thm:subsequence convergence}}
\label{sec:prf thm subsequence convergence}

We first state the following definition of subdifferential for a general lower semicontinuous function, which is not necessarily differentiable.
\begin{defi}(Subdifferentials~\cite{attouch2010proximal}) Let $\sigma:\R^d\rightarrow (-\infty,\infty]$ be a proper and lower semicontinuous function, whose domain is defined as
\[
\domain \sigma:=\left\{\vu\in\R^d:\sigma(\vu)<\infty\right\}.
\]

The Fr\'{e}chet subdifferential $\partial_F \sigma$ of $\sigma$ at $\vu$ is defined by
    \[
    \partial_F \sigma(\vu) = \left\{\vz:\lim_{\vv\rightarrow \vu}\inf\frac{\sigma(\vv) - \sigma(\vu) - \langle \vz, \vv - \vu\rangle}{\|\vu - \vv\|}\geq 0\right\}
    \]
for any $\vu\in \domain \sigma$ and $\partial_F \sigma(\vu) = \emptyset$ if $\vu\notin \domain \sigma$.

The subdifferential $\partial \sigma(\vu)$ of $\sigma$ at $\vu\in \domain \sigma$ is defined as follows
\[
\partial \sigma = \left\{
\vz:\exists \vu_k\rightarrow \vu, \sigma(\vu_k)\rightarrow \sigma(\vu), \vz_k\in \partial_F \sigma(\vu_k)\rightarrow \vz
\right\}
\]
\end{defi}

We say $\vu$ a critical point (a.k.a. stationary point) if subdifferential at $\vu$ is $\vzero$. The set of critical points of $\sigma$ is denoted by $\setC(\sigma)$.

\begin{proof}[Proof of Theorem~\ref{thm:subsequence convergence}]
We prove Theorem~\ref{thm:subsequence convergence} by individually proving the four arguments.

Show (P1): It is clear that for any $k\in\N$, $\mPhi_k\in\setS_\kappa$ and $\mG_k\in\setG_\xi$. Thus we have $\rho(\mPhi_k,\mG_{\ell}) = f(\mPhi_k,\mG_{\ell})$ for any $k,\ell\in\N$. Let $\mB = \mPsi^\T\mPhi_{k+1}^\T\mPhi_{k+1}\mPsi$. Noting that $\setG_\xi$ is a closed  convex set, we have
\[
\left\langle\mG - \calP_{\setG_\xi}(\mB),\calP_{\setG_\xi}(\mB) - \mB  \right\rangle\geq 0, \ \forall \mG \in \setG_\xi,
\]
which directly implies
\begin{align*}
&\rho(\mPhi_{k+1},\mG_k) - \rho(\mPhi_{k+1},\mG_{k+1}) \\&= \|\mG_k - \mB\|^2_F - \|\mG_{k+1} - \mB\|^2_F\\
& = \|\mG_k - \calP_{\setG_\xi}(\mB) + \calP_{\setG_\xi}(\mB) -\mB\|^2_F - \|\calP_{\setG_\xi}(\mB) - \mB\|^2_F\\
& = \|\mG_k - \calP_{\setG_\xi}(\mB)\|_F^2 + 2 \left\langle\mG_k - \calP_{\setG_\xi}(\mB), \calP_{\setG_\xi}(\mB) -\mB\right\rangle\\
& \geq  \|\mG_k - \mG_{k+1}\|_F^2\geq 0.
\end{align*}
On the other hand, we rewrite \eqref{eq:update Phi} as
\begin{equation}\begin{split}
\mPhi_{k+1} &\in \calP_{\calS_\kappa}(\mPhi_{k} - \eta\nabla f(\mPhi_{k},\mG_{k}))\\
& = \argmin_{\mZ\in\setS_\kappa}\|\mZ -\left(\mPhi_{k} - \eta\nabla f(\mPhi_{k},\mG_{k})\right)\|_F^2\\
&\in\argmin_{\mZ\in\setS_\kappa} h_{1/\eta}(\mZ,\mPhi_k,\mG_{k})
\end{split}\label{eq:rewrite Phi}\end{equation}
which implies that
\begin{align*}
h_{1/\eta}(\mPhi_{k+1},\mPhi_k,\mG_{k})&\leq h_{1/\eta}(\mPhi_{k},\mPhi_{k},\mG_{k})\\& = f(\mPhi_k,\mG_{k}).
\end{align*}
This along with Lemma~\ref{lem:descent lemma} gives
\begin{align*}
&f(\mPhi_k,\mG_{k}) - f(\mPhi_{k+1},\mG_{k})\\ & \geq f(\mPhi_k,\mG_{k}) - h_{L_c}(\mPhi_{{k+1}},\mPhi_{k},\mG_{k})\\
& \geq h_{1/\eta}(\mPhi_{k+1},\mPhi_k,\mG_{k}) - h_{L_c}(\mPhi_{{k+1}},\mPhi_{k},\mG_{k})\\
& = \frac{\frac{1}{\eta} - L_c}{2}\| \mPhi_k - \mPhi_{k+1}\|_F^2.
\end{align*}

Show (P2): It follows from \eqref{eq:sufficient decrease} that
\begin{align*}
&\rho_0\geq \rho(\mPhi_1,\mG_0) \geq \rho(\mPhi_1,\mG_{1})\geq \cdots\rho(\mPhi_k,\mG_{k})\\& \geq \rho(\mPhi_{k+1},\mG_{k}) \geq \rho(\mPhi_{k+1},\mG_{k+1})\geq \cdots
\end{align*}
which together with the fact that $\rho(\mPhi_k,\mPsi_k)\geq 0$ gives the convergence of sequence $\{\rho(\mPhi_k,\mG_k)\}_{k\geq 0}$. This also implies that $(\mPhi_k,\mG_k)\in \setL_{\rho_0}$ and hence $\{(\mPhi_k,\mG_k)\}_{k\geq 0}$ is a bounded sequence.

Show (P3): Utilizing \eqref{eq:sufficient decrease} for all $k\in\N$ and summing them together, we obtain
\begin{align*}
&\sum_{k=0}^\infty \frac{\frac{1}{\eta} - L_c}{2}\left(\|\mPhi^k - \mPhi^{k+1}\|_F^2\right) + \|\mG^k - \mG^{k+1}\|_F^2\\
&\leq \rho_0 - \lim_{k\rightarrow \infty} \rho(\mPhi_k,\mG_{k})\leq \rho_0,
\end{align*}
which implies that the series $\{\sum_{k=0}^n \|\mPhi^k - \mPhi^{k+1}\|_F^2 + \|\mG^k - \mG^{k+1}\|_F^2\}_n$ is convergent. This together with the fact that $ \|\mPhi^k - \mPhi^{k+1}\|_F^2\geq 0$ and $\|\mG^k - \mG^{k+1}\|_F^2\geq 0$ gives \eqref{eq:diff goes to 0}.

Show (P4): We rewrite \eqref{eq:rewrite Phi} as
\e
\mPhi_{k+1} \in \argmin_{\mPhi\in \R^{M\times N}} h_{\frac{1}{\eta}}(\mPhi,\mPhi_{k},\mG_{k}) + \delta_{\setS_\kappa}(\mPhi),
\label{eq:rewrite Phi 2}\ee
which implies (by the optimality of $\mPhi_{k+1}$ in \eqref{eq:rewrite Phi 2} and letting $\mPhi = \underline{\mPhi}$, i.e. the limit  of a convergent subsequence $\{\mPhi_{k'}\}_{k'}$)
\begin{align*}
&\langle \nabla_{\mPhi} f(\mPhi_k,\mG_{k}), \mPhi_{k+1}-\mPhi_{k}  \rangle + \frac{\eta}{2}\|\mPhi_{k+1} - \mPhi_k\|_F^2 + \delta_{\setS_\kappa}(\mPhi_{k+1})\\
&\leq\langle \nabla_{\mPhi} f(\mPhi_k,\mG_{k}), \underline{\mPhi}-\mPhi_{k}  \rangle + \frac{\eta}{2}\|\underline{\mPhi} - \mPhi_{k}\|_F^2 + \delta_{\setS_\kappa}(\underline{\mPhi}).
\end{align*}
This further gives (take limit on subsequence $\{\mPhi_{k'}\}_{k'}$)
\begin{equation}\begin{split}
&\limsup_{k'\rightarrow \infty}\delta_{\setS_\kappa}(\mPhi_{k'}) - \delta_{\setS_\kappa}(\underline{\mPhi}) \\
&\leq \limsup_{k'\rightarrow \infty} \langle \nabla_{\mPhi} f(\mPhi_{k'-1},\mG_{k'-1}), \underline{\mPhi}-\mPhi_{k'}  \rangle\\
 &\quad + \frac{\eta}{2}\|\mPhi_{k'-1} - \underline{\mPhi}\|_F^2- \frac{\eta}{2}\|\mPhi_{k'} - \mPhi_{k'-1}\|_F^2\\
& = 0,
\end{split}\label{eq:lim sup}\end{equation}
where the last line follows from \eqref{eq:diff goes to 0}, the fact that scalar product is continuous and $\lim_{k'\rightarrow\infty}\|\mPhi_{k'-1} - \underline{\mPhi}\|_F = 0$ since
\begin{align*}
0&\leq \lim_{k'\rightarrow\infty}\|\mPhi_{k'-1} - \underline{\mPhi}\|_F\\& = \lim_{k'\rightarrow\infty}\|\mPhi_{k'-1} -\mPhi_{k'} +\mPhi_{k'}  - \underline{\mPhi}\|_F\\& \leq  \lim_{k'\rightarrow\infty}\|\mPhi_{k'} - \underline{\mPhi}\|_F + \|\mPhi_{k'} - \mPhi_{k'-1}\|_F = 0.
\end{align*}
From the fact that $\delta_{\setS_\kappa}(\mPhi)$ is lower semi-continuous, we have
 \[
 \delta_{\setS_\kappa}( \underline{\mPhi})  \leq \liminf_{k'\rightarrow \infty}\delta_{\setS_\kappa}(\mPhi_{k'}).\]
Utilizing \eqref{eq:lim sup}  gives
\[
\limsup_{k'\rightarrow \infty}\delta_{\setS_\kappa}(\mPhi_{k'})  \leq \delta_{\setS_\kappa}(\underline{\mPhi})\leq \liminf_{k'\rightarrow \infty}\delta_{\setS_\kappa}(\mPhi_{k'}),
\]
which together with the fact 
\[\liminf_{k'\rightarrow \infty}\delta_{\setS_\kappa}(\mPhi_{k'})  \leq \limsup_{k'\rightarrow \infty}\delta_{\setS_\kappa}(\mPhi_{k'})
\] gives
 \[
 \delta_{\setS_\kappa}(\underline{\mPhi}) = \liminf_{k'\rightarrow \infty}\delta_{\setS_\kappa}(\mPhi_{k'})    = \limsup_{k'\rightarrow \infty}\delta_{\setS_\kappa}(\mPhi_{k'}),
 \]
and hence
 \[
\lim_{k'\rightarrow \infty}\delta_{\setS_\kappa}(\mPhi_{k'})  = \delta_{\setS_\kappa}(\underline{\mPhi}).
\]

Since $\setG_\xi$ is a compact set and $\mG_{k'}\in \setG_\xi, \ \forall \ k'\in\N$, we have the limit point $\underline{\mG} \in \setG_\xi$. Therefore, we obtain
\begin{align*}
 \lim_{k'\rightarrow \infty}\rho(\mPhi_{k'},\mG_{k'})& =  \lim_{k'\rightarrow \infty} f(\mPhi_{k'},\mG_{k'})+ \delta_{\setS_\kappa}(\mPhi_{k'}) + \delta_{\setG_\xi}(\mG_{k'})\\& =  \rho(\underline{\mPhi},\underline{\mG}).
\end{align*}

The remaining part is to prove that $\underline{\mW} = (\underline{\mPhi},\underline{\mG})$ is a stationary point of $\rho$, which is equivalent to show $(\vzero,\vzero)\in \partial \rho(\underline{\mPhi},\underline{\mG})$. In what follows, we show a stronger result that $(\mzero,\mzero)\in\lim_{k\rightarrow \infty} \partial \rho(\mPhi_k,\mG_k)$.

First note that
\begin{align*}
\mG_{k} &= \argmin_{\mG\in\R^{L\times L}} \rho(\mPhi_{k},\mG)
\end{align*}
The optimality condition gives \cite{bolte2014proximal}
\e
\mzero =  \underbrace{\nabla_{\mG} f(\mPhi_k,\mG_k) + \mU_k}_{\mD_{\mG_k}}\in \partial_{\mG} \rho(\mW_k),
\label{eq:partial G}\ee
where $\mU_k \in \partial \delta_{\setG_\xi}(\mG_k)$. On the other hand,
the optimality condition of \eqref{eq:rewrite Phi 2} gives (by setting $k \leftarrow k-1$ in \eqref{eq:rewrite Phi 2})
\[
\nabla_{\mPhi} f(\mPhi_{k-1},\mG_{k-1}) + \frac{1}{\eta}(\mPhi_k - \mPhi_{k-1}) + \mV_k = \vzero,
\]
where $\mV_k \in \partial \delta_{\setS_\kappa}(\mPhi_k)$. Thus we have
\[
\underbrace{\nabla_{\mPhi} f(\mW_{k}) -\nabla_{\mPhi} f(\mW_{k-1}) - \frac{1}{\eta}(\mPhi_k - \mPhi_{k-1})}_{\mD_{\mPhi_k}} \in \partial_{\mPhi} \rho(\mW_{k}),
\]
which along with \eqref{eq:partial G} gives
\begin{align}
&\left\|(\mD_{\mPhi_k},\mD_{\mG_k})\right\|_F = \left\|\mD_{\mPhi_k}\right\|_F \nonumber\\
&= \| \nabla_{\mPhi} f(\mW_{k}) -\nabla_{\mPhi} f(\mW_{k-1}) - \frac{1}{\eta}(\mPhi_k - \mPhi_{k-1}) \|_F \nonumber\\
&\leq \| \nabla_{\mPhi} f(\mW_{k}) -\nabla_{\mPhi} f(\mPhi_k,\mG_{k-1})\| \nonumber\\
& \quad + \| \nabla_{\mPhi} f(\mPhi_k,\mG_{k-1}) -\nabla_{\mPhi} f(\mW_{k-1})\| \nonumber\\
&\quad + \frac{1}{\eta}\|\mPhi_k - \mPhi_{k-1}\|_F \nonumber\\
& \leq L_c \|\mG_k - \mG_{k-1}\|_F + (L_c + \frac{1}{\eta}) \|\mPhi_k - \mPhi_{k-1}\|_F \nonumber\\
&   \leq (2L_c + \frac{1}{\eta}) \|\mW_k - \mW_{k-1}\|_F,       \label{safeguard}
\end{align}
where we have used the Lipschitz gradient  \eqref{eq:Lipschitz gradient} in the second inequality.

Applying~\eqref{eq:diff goes to 0}, we finally obtain
\[
\lim_{k\rightarrow \infty} (\mD_{\mPhi_{k}},\mD_{\mG_{k}}) = (\mzero,\mzero)
\]
since
\[
\lim_{k\rightarrow \infty} \left\|(\mD_{\mPhi_{k}},\mD_{\mG_{k}})\right\|_F = 0.
\]
Thus $(\mzero,\mzero)\in\lim_{k\rightarrow \infty} \partial \rho(\mPhi_k,\mG_k)$ and we conclude that  any convergent subsequence of $\{\mW_k\}$ converges to a stationary point of \eqref{eq:unconstrainded problem}.

Finally, the statement
\[
\lim_{k\rightarrow \infty}\rho(\mW_k) = \rho(\underline{\mW}).
\]
directly follows from (P2) that the objective value sequence $\{\rho(\mW_k)\}_{k\in\N}$ is convergent.

\end{proof}

\section{Proof of \cref{thm:sequence convergence}}
\label{sec:prf thm sequence convergence}
We first state the definition of Kurdyka-Lojasiewicz (KL) inequality, which is proved to be useful for convergence analysis~\cite{attouch2010proximal,attouch2013convergence,bolte2014proximal}.
\begin{defi}\label{def:KL}
	A proper semi-continuous  function $\sigma(\vu)$ is said  to  satisfy Kurdyka-Lojasiewicz (KL) inequality,  if  $\underline{\vu}$ is a stationary point of $\sigma(\vu)$, then $\exists ~\delta>0,~\theta\in[0,1),~C_1>0,~s.t.$
\[
\left|\sigma(\vu) - \sigma(\underline{\vu})\right|^{\theta}  \leq C_1 \|\vv\|,~~\forall~\vu\in B(\underline{\vu}, \delta),~\forall~\vv\in \partial \sigma(\vu)
\]
\end{defi}

It is clear that our objective function $\rho(\mPhi, \mG)$ is lower semi-continuous and it satisfies the above KL inequality since the three components $f(\mPhi,\mG)$, $\delta_{\setS_\kappa}(\mPhi)$  and $\delta_{\setG_\xi}(\mG)$ all have the KL inequality~\cite{attouch2013convergence,bolte2014proximal}.

\begin{proof}[Proof of Theorem~\ref{thm:sequence convergence}]
\Cref{thm:subsequence convergence} reveals the subsequential convergence property of the iterates sequence $\{\mW_k= (\mPhi_{k},\mG_{k})\}_k$, i.e., the limit point of any convergent subsequence converges to a stationary point. In what follows, we show the sequence $\{\mW_k= (\mPhi_{k},\mG_{k})\}_k$ itself is indeed convergent, and hence it converges to a certain stationary point of $\underline{\mW} = (\underline{\mPhi},\underline{\mG})$.

It follows from \eqref{eq:lim f = f lim} that for any $\delta >0$, there exists an integer $n$ such that $\mW_k\in B(\underline{\mW}, \delta),~\forall~k>n$ for some  stationary point $\underline{\mW} \in \calC(\rho)$. From the concavity of  the function $h(y)= y^{1-\theta}$ with domain $ y>0$, we have\footnote{If a differential function $f(\vx)$ is concavity, the following inequality holds: $f(\vy)-f(\vx)\leq \langle\nabla f(\vx),\vy-\vx\rangle$.}
\begin{align}\label{convexity}
&\left[  \rho(\mW_{k+1}) - \rho(\underline{\mW})    \right]^{1-\theta}\\
 &\leq \left[  \rho(\mW_{k}) - \rho(\underline{\mW})   \right]^{1-\theta} + (1-\theta) \frac{\rho(\mW_{k+1})-\rho(\mW_{k})}{ \left[  \rho(\mW_{k}) - \rho(\underline{\mW})    \right]^{\theta}}.\nonumber
\end{align}
We now provide lower bound and upper bound for $\rho(\mW_{k}) - \rho(\mW_{k+1})$ and $\left[  \rho(\mW_{k}) - \rho(\underline{\mW})    \right]^{\theta}$, respectively. It follows from \eqref{eq:sufficient decrease} that
\[
\rho(\mW_{k}) - \rho(\mW_{k+1}) \geq C_2\|\mW_{k+1} - \mW_{k}\|_F^2, \]
where $C_2 = \min\{\frac{\frac{1}{\eta} - L_c}{2}, 1\}$.  On the other hand, from \eqref{safeguard} and  the KL inequality we have
\begin{align*}
\left[  \rho(\mW_{k}) - \rho(\underline{\mW})    \right]^{\theta}& \leq C_1 \left\|(\mD_{\mPhi_k},\mD_{\mG_k})\right\|_F\\&\leq C_3\|\mW_{k} - \mW_{k-1}\|_F,
\end{align*}
where $C_3 = C_1(2L_c+1/\eta)$. Plugging the above two inequalities into \eqref{convexity} gives
\begin{align}\label{convexity2}
&\left[  \rho(\mW_{k}) - \rho(\underline{\mW})   \right]^{1-\theta} - \left[  \rho(\mW_{k+1}) - \rho(\underline{\mW})    \right]^{1-\theta} \nonumber\\
&\geq  (1-\theta) \frac{C_2\|\mW_{k+1} - \mW_{k}\|_F^2 }{ C_3\|\mW_{k} - \mW_{k-1}\|_F}.
\end{align}
Let $C_4 = (1-\theta)C_2/C_3$. Repeating the above equation for $k$ from $1$ to $\infty$ and summing them gives
\begin{align*}\label{convexity3}
&\frac{1}{C_4}\left[  \rho(\mW_{1}) - \rho(\underline{\mW})   \right]^{1-\theta} - \frac{1}{C_4}\left[  \rho(\mW_{\infty}) - \rho(\underline{\mW})    \right]^{1-\theta} \\
&\geq  \sum_{k = 1}^{\infty}\frac{\|\mW_{k+1} - \mW_{k}\|_F^2 }{ \|\mW_{k} - \mW_{k-1}\|_F} + \|\mW_{k} - \mW_{k-1}\|_F\\
& \quad - \|\mW_{k} - \mW_{k-1}\|_F       \\
&\stackrel{(i)}{\geq}  2\sum_{k = 1}^{\infty}\|\mW_{k+1} - \mW_{k}\|_F - \sum_{k = 1}^{\infty}\|\mW_{k} - \mW_{k-1}\|_F       \\
& = \sum_{k = 1}^{\infty}\|\mW_{k+1} - \mW_{k}\|_F   - \|\mW_{1} - \mW_{0}\|_F,
\end{align*}
where $(i)$ is from the arithmetic inequality, i.e., $a^2+b^2\geq 2ab$. The proof is completed by applying the above result with the boundedness of $\{\mW_k\}_k$ and \eqref{eq:lim f = f lim}:
 $$\sum_{k = 1}^{\infty}\|\mW_{k+1} - \mW_{k}\|_F< \infty$$
which implies that the sequence $\{\mW_{k}\}_{k\in\N}$ is Cauchy \cite{bolte2014proximal} in a compact set and hence it is convergent.
\end{proof}

\section{The Choice of Step Size With Unknown Lipschitz Constant $L_c$}
\label{Backtracking:unknow LC}
In practice, it is challenge to choose an appropriate step size since it is not easy to compute the Lipschitz constant $L_c$. According to the given convergence analysis in Section \ref{sec:proposed algorithm}, we know if the step size is chosen  to satisfy \eqref{eq:sufficient decrease}, the convergence is still guaranteed. Thus, we can utilize the backtracking method \cite{nocedal2006numerical} with inequality \eqref{eq:sufficient decrease} to search an appropriate step size without knowing $L_c$. The procedure is detailed in Algorithm~\ref{alg:  Backtracking}.

\begin{algorithm}[!htb]
	\caption{Backtracking Procedure}
	\label{alg:  Backtracking}
	\begin{algorithmic}[1]
		\REQUIRE ~\\
		 Initial value: $(\eta_0,\gamma,\alpha)$ where $\eta_0$ is the initial guess step size, $\gamma\in(0,1)$ and $\alpha\in(0,1)$.
		\lastcon ~\\          
		Step size $\eta$ and Updated $\mPhi_{k+1}$.
		\STATE $ \eta \leftarrow \eta_0 $
		\STATE $\mPhi_{k+1}  \leftarrow \calP_{\calS_{\kappa}} (\mPhi_k-\eta \nabla_{\mPhi}f(\mPhi_k,\mG_{k}))$
		\WHILE {$\rho(\mPhi_k,\mG_{k}) - \rho(\mPhi_{k+1},\mG_{k})<\frac{\gamma}{2\eta}\|\mPhi_{k+1} - \mPhi_{k}\|_F^2$}
		\STATE $\eta \leftarrow \alpha \eta$
		\STATE Update $\mPhi_{k+1}$: $\mPhi_{k+1}  \leftarrow \calP_{\calS_{\kappa}} (\mPhi_k-\eta \nabla_{\mPhi}f(\mPhi_k,\mG_{k}) )$
		\ENDWHILE
	\end{algorithmic}
\end{algorithm}

\bibliography{RefSensingMtx}
\bibliographystyle{ieeetr}

\end{document}